\DeclarePairedDelimiterXPP\seq[2]{}{\big(}{\big)}{_{#2}}{#1}
\newtheorem{prop}{Proposition}
\newtheorem{defn}{Definition}
\newtheorem{remarks}{Remarks}
\def\scal#1#2{\langle #1\mid #2 \rangle}
\def\scal#1#2{\langle #1\bv#2 \rangle}
\def\bv{\mid}
\def\bstr{b}
\def\bfstr{bf}
\def\cstr{c}
\def\fstr{f}
\def\strLst{A,B,C,D,d,E,F,G,H,I,J,K,L,M,N,O,P,Q,R,S,T,U,V,W,X,Y,Z}
\newcommand{\MkB}[1]{\expandafter\def\csname\bstr#1\endcsname{\mathbb{#1}}}
\strLst\do{%
    \expandafter\MkB \i     }
\newcommand{\MkBF}[1]{\expandafter\def\csname\bfstr#1\endcsname{\mathbf{#1}}}
\strLst\do{%
    \expandafter\MkBF \i     }
\newcommand{\MkCal}[1]{\expandafter\def\csname\cstr#1\endcsname{\mathcal{#1}}}
\strLst\do{%
    \expandafter\MkCal \i     }
\newcommand{\MkFrak}[1]{\expandafter\def\csname\fstr#1\endcsname{\mathfrak{#1}}}
\strLst\do{%
    \expandafter\MkFrak \i     }
\newcommand{\Mod}[1]{\ \mathrm{mod}\ #1}
\newcommand*\pFq[6][8]{%
  \begingroup % only local assignments
  \pFqmuskip=#1mu\relax
  \mathchardef\normalcomma=\mathcode`,
  % make the comma math active
  \mathcode`\,=\string"8000
  % and define it to be \pFqcomma
  \begingroup\lccode`\~=`\,
  \lowercase{\endgroup\let~}\pFqcomma
  % typeset the formula
  {}_{#2}F_{#3}{\left[\genfrac..{0pt}{}{#4}{#5};#6\right]}%
  \endgroup
}
\newcommand{\pFqcomma}{{\normalcomma}\mskip\pFqmuskip}
\newcommand*{\@rowstyle}{}
\newcommand*{\rowstyle}[1]{% sets the style of the next row
  \gdef\@rowstyle{#1}%
  \@rowstyle\ignorespaces%
}
\newcolumntype{=}{% resets the row style
  >{\gdef\@rowstyle{}}%
}
\newcolumntype{+}{% adds the current row style to the next column
  >{\@rowstyle}%
}
\lstdefinelanguage{Maple}% 
{morekeywords={and,assuming,break,by,catch,description,do,done,% 
elif,else,end,error,export,fi,finally,for,from,global,if,% 
implies,in,intersect,local,minus,mod,module,next,not,od,% 
option,options,or,proc,quit,read,return,save,stop,subset,then,% 
to,try,union,use,uses,while,xor},% 
sensitive=true,% 
morecomment=[l]\#,% 
morestring=[b]",% 
morestring=[d]"% 
}[keywords,comments,strings]% 
\definecolor{Frame}{RGB}{55,155,198}
\newcolumntype{L}{>{$}l<{$}} % math-mode version of "l" column type
\newcolumntype{R}{>{$}r<{$}} % math-mode version of "r" column type
\newcolumntype{C}{>{$}c<{$}} % math-mode version of "c" column type
\begin{document}

\title[Lacunary generating functions of Hermite polynomials]{Explicit formulae for all higher order\\ exponential lacunary generating functions\\ of Hermite polynomials}

\author{N.~Behr}
\address{Universit\'{e} Paris-Diderot (Paris 07), IRIF, F-75205 Paris Cedex 13, France}
\email{nicolas.behr@irif.fr}
\thanks{Corresponding author email address: \texttt{nicolas.behr@irif.fr}. The work of NB was supported by a \emph{Marie Sk\l{}odowska-Curie actions Individual Fellowship} (grant \# 753750 -- RaSiR). NB would like to thank the LPTMC at Universit\'{e} Pierre et Marie Curie (Paris 06), Sorbonne Universit\'{e}s, for warm hospitality.}
\author{G.~H.~E.~Duchamp}
\address{Universit\'{e} Paris 13, Sorbonne Paris Cit\'{e}, LIPN, CNRS UMR 7030, F-93430 Villetaneuse, France}
\email{ghed@lipn.univ-paris13.fr}

\author{K.~A.~Penson}
\address{Universit\'{e} Pierre et Marie Curie (Paris 06), Sorbonne Universit\'{e}s, LPTMC,  CNRS UMR 7600, F-75252 Paris Cedex 05, France}
\email{penson@lptl.jussieu.fr}

\date{}

%% main text

\maketitle

\begin{abstract}
For a sequence $P=(p_n(x))_{n=0}^{\infty}$ of polynomials $p_n(x)$, we study the $K$-tuple and $L$-shifted exponential lacunary generating functions $\mathcal{G}_{K,L}(\lambda;x):=\sum_{n=0}^{\infty}\frac{\lambda^n}{n!} p_{n\cdot K+L}(x)$, for $K=1,2\dotsc$ and $L=0,1,2\dotsc$. We establish an algorithm for efficiently computing $\mathcal{G}_{K,L}(\lambda;x)$ for generic polynomial sequences $P$. This procedure is exemplified by application to the study of Hermite polynomials, whereby we obtain closed-form expressions for $\mathcal{G}_{K,L}(\lambda;x)$ for arbitrary $K$ and $L$, in the form of infinite series involving generalized hypergeometric functions. The basis of our method is provided by certain resummation techniques, supplemented by operational formulae. Our approach also reproduces all the results previously known in the literature. 
\end{abstract}

\section{Introduction}

Lacunary generating functions appeared previously in a number of circumstances, including for example the treatment of Cauchy problems in partial differential equations~\cite{babusci2017lacunary,penson2018quasi}. Here, we develop a rather general technique for the treatment of such generating functions, applicable to sequences $P=(p_n(x,y))_{n=0}^{\infty}$ of polynomials $p_n(x,y)$, where $x$ is the generic variable and $y$ plays the role of a parameter. Such two-variable extensions of one-variable polynomials have been strongly advocated in~\cite{babusci2010lectures}. They can be logically and consistently defined for all standard families of orthogonal polynomials such as Hermite, Laguerre, Chebyshev of first and second kind, Jacobi and Legendre polynomials~\cite{babusci2017lacunary,babusci2010lectures,beals2016special}. Once such two-variable equivalents are properly defined, their one-variable variants are obtained by fixing the values of both variables to functions of one of the variables. The concrete example considered in this paper is given by the two-variable Hermite (or so-called Hermite-Kamp\'{e} de F\'{e}riet) polynomials $H_n(x,y)$~\cite{kampe,dattoli1997evolution}, from which the standard one-variable Hermite polynomials $H_n(x)$ may be recovered via (see~\eqref{eq:HPtwo} below for the definition of $H_n(x,y)$)
\begin{equation}
H_n(x)=H_n(2x,-1)\,.
\end{equation}
We will focus our particular attention onto the derivation of a general formula for the $K$-tuple $L$-shifted lacunary generating functions $\cH_{K,L}(\lambda;x,y)$ of the two-variable Hermite polynomials $H_n(x,y)$, which are defined (for $K=1,2,3,\dotsc$ and $L=0,1,2,\dotsc$) by
\begin{equation}\label{eq:HLGFa}
\cH_{K,L}(\lambda;x,y):=\sum_{n=0}^{\infty}\frac{\lambda^n}{n!}\; H_{n\cdot K+L}(x,y)\,.
\end{equation}
The exponential generating functions of type~\eqref{eq:HLGFa} for Hermite and other types polynomials are very sparsely known, and progress in obtaining new closed-form formulas has been painstakingly slow. A glance at standard reference tables~\cite{prudnikov1992integrals} reveals only a few known examples. A number of results in this vein were obtained by combinatorial approaches initiated by D.~Foata and V.~Strehl in~\cite{foataStrehl1984} supplemented by umbral methods, see~\cite{dattoli2017operational} and references therein. This methodology culminated recently in a tandem study of various lacunary generating functions of Laguerre polynomials derived by purely umbral-type~\cite{babusci2017lacunary} and purely combinatorial methods~\cite{strehl2017lacunary}. Only a few results are currently available for lacunary generating functions of Hermite polynomials: the double lacunary case has been combinatorially re-derived by D.~Foata in~\cite{foata1981some}, whereas the more challenging triple-lacunary generating function has been derived by both umbral and combinatorial methods in~\cite{gessel2005triple}. Finally, several new lacunary generating functions for Legendre and Chebyshev polynomials were obtained recently by a combination of analytic and umbral methods in~\cite{gorskalacunary}. To conclude our short survey of results known previously in the literature, let us comment that there exists a related result due to Nieto and Truax~\cite{nieto1995arbitrary}, which (in the form adapted to the two-variable Hermite polynomials $H_n(x,y)$ as presented in~\cite{dattoli1997evolution,dattoli1998operational}) reads for $K\in \bZ_{\geq 1}$, $L\in \bZ_{\geq 0}$ and $L<K$
\begin{subequations}
\begin{align}
\cS_{K,L}(\lambda;x,y)&:=\sum_{n=0}^{\infty}\frac{\lambda^{n\cdot K+L}}{(n\cdot K+L)!}H_{n\cdot K+L}(x,y)\\
&=\frac{1}{K}\sum_{\ell=1}^K\; \frac{e^{x\tau_{\ell}+y\tau_{\ell}^2}}{e^{2\pi i \ell L/K}}\,,\quad 
\tau_{\ell}:=\lambda e^{2\pi i \ell/K}\,.
\end{align}
\end{subequations}
Note however that this type of generating function is not the lacunary exponential type studied in the present paper.\\

We believe to have extended the knowledge of these objects by providing a general methodology for the study of lacunary generating functions. Indeed, our results are obtained by specialization of a general algorithm developed by us here (see Lemma~1 and Corollary~1). Quite remarkably,  it takes as its input just the coefficients $g_{r,m}(y)$ of the \emph{elementary} exponential generating functions $\cG_{1,0}(\lambda;x,y)$, defined via
\begin{subequations}
\begin{align}
\cG_{1,0}(\lambda;x,y)&=\sum_{n=0}^{\infty}\frac{\lambda^n}{n!} p_n(x,y)\\
&=\sum_{r=0}^{\infty}x^r\sum_{m=0}^{\infty}\frac{\lambda^{r+m}}{(r+m)!}\; g_{r\!,\,m}(y)\,,
\end{align}
\end{subequations}
and returns (for arbitrary parameters $K=1,2,\dotsc$ and $L=0,1,\dotsc$) closed-form expressions for the $K$-tuple $L$-shifted lacunary generating functions $\cG_{K,L}(\lambda;x,y)$. Armed with this technique, we derive \emph{all} lacunary generating functions $\cH_{K,L}(\lambda;x,y)$ for the two-variable Hermite polynomials $H_n(x,y)$ (see Theorem~\ref{thm:HPlacK}, and also Table~\ref{tab:HPlacFull} and Table~\ref{tab:HPlacShift} for a number of examples).\\

The paper is structured as follows: in Section~\ref{sec:Alg}, we derive our general algorithm for computing lacunary generating functions. These results are specialized in Section~\ref{sec:HP} to the case of two-variable Hermite-polynomials. Additional details of the proofs are provided in Appendix~\ref{app:proofs}. For the readers' convenience, an elementary Maple code for algorithmically verifying some of our explicit formulae of Table~\ref{tab:HPlacFull} is given in Appendix~\ref{app:Maple}.

\section{An algorithm for computing lacunary generating functions}
\label{sec:Alg}

Suppose we are given a sequence of polynomials $p_n(x,y)$, for $n\geq 0$, i.e.\ one polynomial for each non-negative integer index $n$ and such that 
\[
\deg_x(p_n(x,y))=n\,.
\]
Typically, we will consider the variable $y$ as a formal parameter, and focus on expansions in terms of the variable $x$. Note also that one could without additional complications admit multiple variables $y_1,y_2,\dotsc$ instead of a single ``parameter'' variable $y$, so we will, without loss of generality, consider here only the case of a single ``parameter'' variable $y$.\\

Given a particular set of polynomials, one may compute the exponential generating function $\cG(\lambda;x,y)$ for this set as
\begin{subequations}
\begin{align}
\cG(\lambda;x,y)&:=\sum_{n=0}^{\infty}\frac{\lambda^n}{n!} \; p_n(x,y)\label{eq:EGFa}\\
&=\sum_{r=0}^{\infty} x^r \sum_{m=0}^{\infty} \frac{\lambda^{r+m}}{(r+m)!}\; g_{r\!,\,m}(y)\,. \label{eq:EGFb}
\end{align}
\end{subequations}
While the form as presented in~\eqref{eq:EGFa} merely amounts to re-encoding of the information available via the explicit definition of the polynomials $p_n(x,y)$, the form~\eqref{eq:EGFb} in fact necessitates a calculation: equation~\eqref{eq:EGFa} must be expanded into powers of $x$ and then further in powers of the formal variable $\lambda$, which (in the specific pairing of powers $\lambda^{r+m}$ with $1/(r+m)!$) defines a set of expansion coefficients $g_{r\!,\,m}(y)$. To provide a concrete example, we demonstrate the calculation for the case of the two-variable Hermite polynomials $H_n(x,y)$ that will play a central role later. Their exponential generating function $\cH(\lambda;x,y)$ reads~\cite{babusci2010lectures,dattoli1997evolution}
\begin{equation}\label{eq:EGFhp1}
\cH(\lambda;x,y):=\sum_{n=0}^{\infty}\frac{\lambda^n}{n!} H_n(x,y)=e^{\lambda x+\lambda^2 y}\,.
\end{equation}
To obtain the form as described in~\eqref{eq:EGFb}, we expand the EGF first in terms of the variable $x$, and then in terms of the variable $\lambda$:
\begin{equation}
\begin{aligned}
\cH(\lambda;x,y)&=e^{\lambda x+\lambda^2 y}\\
&=\left(\sum_{r=0}^{\infty} x^r\; \tfrac{\lambda^r}{r!}\right)\;e^{\lambda^2 y}\\
&=\sum_{r=0}^{\infty} x^r \; \sum_{m=0}^{\infty} \frac{\lambda^{r+2m}}{r! m!}y^m\\
&=\sum_{r=0}^{\infty} x^r \; \sum_{m=0}^{\infty} \frac{\lambda^{r+2m}}{(r+2m)!}\; \left(\frac{(r+2m)!\; y^m}{r!m!}\right)\\
&=\sum_{r=0}^{\infty}x^r \sum_{m=0}^{\infty}\frac{\lambda^{r+2m}}{(r+2m)!}\; h_{r\!,\,2m}(y)\,.
\end{aligned}
\end{equation}
Consequently, for these polynomials by comparison with the defining equation~\eqref{eq:EGFb}, we find that $g_{r\!,\,m}(y)=0$ for $m$ odd, and 
\begin{equation}
g_{r\!,\,2m}(y)=\frac{(r+2m)! y^m}{r!m!}\,.
\end{equation}

Given again a set of polynomials $p_n(x,y)$, one may also define a more general family of generating functions, the so-called \emph{$K$-tuple $L$-shifted lacunary generating functions} $\cG_{K,L}(\lambda;x,y)$ (for $K\in \bZ_{\geq 1}$ and $L\in \bZ_{\geq 0}$):
\begin{subequations}
\begin{align}
\cG_{K,L}(\lambda;x,y)&:=
\sum_{n=0}^{\infty}\frac{\lambda^n}{n!} \; 
p_{K\cdot n+L}(x,y)\label{eq:LGFa}\\
&=\sum_{r=0}^{\infty} x^r \sum_{m=0}^{\infty} \frac{\lambda^{r+m}}{(r+m)!}\; g^{(K,L)}_{r\!,\,m}(y)\,. \label{eq:LGFb}
\end{align}
\end{subequations}
Note that by definition,
\begin{equation}\label{eq:LGFegf}
\cG_{1,0}(\lambda;x,y)\equiv\cG(\lambda;x,y)\,.
\end{equation}

Before we continue with further elaborations, we wish to clarify the precise framework within which the series rearrangements that are essential to our methods are well-posed.

\subsection{Summability}\label{sec:sum}

Polynomials or series of any sort are functions $M\to k$ where $M$ is a set of monomials (a monoid\footnote{The product of two monomials is a monomial, there exists a neutral for this multiplication, the void monomial.}) and $k$ its set of coefficients ($\mathbb{R},\mathbb{C}$ any ring as a ring of polynomials or series). In order to manipulate safely infinite sums, such as in the transition from~\eqref{eq:EGFa} to the form~\eqref{eq:EGFb}, these spaces are endowed with the notion of \emph{summability}. 
\begin{defn} A family of series $(S_i)_{i\in I}$ is said {\it summable} \cite{reutenauer} if it is {\it locally finite} \cite{berstel} i.e. if 
\begin{equation}
(\forall m\in M)\big(\; \vert\{i\in I|\scal{S_i}{m}\not=0\}\vert<\infty\;\big)\,,
\end{equation}
where $\scal{S_i}{m}$ stands for ``the coefficient of the monomial $m$ in $S_i$''.\\ 
In this case ($(S_i)_{i\in I}$ is summable) we say that $S=\sum_{i\in I}S_i$ where, for all $m\in M$
\begin{equation}
\scal{S}{m}=\sum_{i\in I}\scal{S_i}{m}\,.
\end{equation}
\end{defn}  
\begin{remarks} i) In $k[|x]]$ (series of one variable) $M=\{x^n\}_{n\geq 0}$, the family $\{(x+x^2)^n\}_{n\geq 0}$  is summable (with sum $\frac{1}{1-x-x^2}$), while $\{(x+1)^n\}_{n\geq 0}$ is not.\\
ii) As an ``a posteriori'' justification, for all series $S$, one can easily check that the family 
$(\scal{S}{m}\,m)_{m\in M}$ is summable with sum $S$, hence the additive notation can be rigorously set as 
\begin{equation}
S=\sum_{m\in M} \scal{S}{m}\,m\,.
\end{equation}  
iii) The reader aware of topology will find that this notion is none other that the {\rm summability for the topology of point-wise convergence (on $M$), $k$ being endowed with the discrete topology}~\cite{GT1}.\\
iv) The sequence $\Big((1+\frac{x}{n})^{n}\Big)_{n\geq 1}$ converges towards $e^x$, but coefficient by coefficient, this is a coarser topology, with -- this time -- $\mathbb{R}$ endowed with the usual topology, i.e.\ Treves' topology~\cite{SMF,treves}.          
\end{remarks}

We will now present first a closed-form equation that makes explicit the relationship between the expansion coefficients $g_{r,\,m}^{(K,L)}(y)$, see equation~\eqref{eq:LGFb}, and the expansion coefficients $g^{(1,0)}_{r,\,m}(y)$ of the exponential generating function $\cG_{1,0}(\lambda;x,y)$, see equation~\eqref{eq:LGFegf}. 

\subsection{Lacunary shifts}

It follows directly from the definition given in~\eqref{eq:LGFa} that the operator
\begin{equation}
\bS_L:=\left(\frac{\partial}{\partial \lambda}\right)^L
\end{equation}
implements the action of lacunary shifts on exponential generating functions $\cG_{K,0}(\lambda;x,y)$:
\begin{equation}\label{eq:lacShift}
\bS_L \left(\cG_{K,0}(\lambda;x,y)\right)=\sum_{n=L}^{\infty}\frac{\lambda^{n-L}}{(n-L)!}\; p_{n\cdot K}(x,y)
=\sum_{n=0}^{\infty}\frac{\lambda^{n}}{n!}\; p_{n K+L}(x,y)=\cG_{K,L}(\lambda;x,y)\,.
\end{equation}
While it may be intricate in its own right to extract from this equation the lacunary expansion coefficients $g^{(1,L)}_{r,\,m}(y)$ as defined in~\eqref{eq:LGFb}, this computation is in principle straightforward. Therefore we will focus in this paper mostly on providing formulae for the lacunary dilatations characterized by an integer $K$. However, for the explicit application example as presented in Section~\ref{sec:HP}, we will be able to take advantage of certain operational techniques particular to the Hermite polynomials in order to compute the lacunary shifts of their respective generating functions in closed form.

\subsection{Lacunary dilatations}

According to the definition given in~\eqref{eq:LGFa}, the so-called \emph{$K$-fold lacunary dilatation} (for $K\in \bZ_{\geq 1}$) is implemented by means of the (formal) operator, which acts on formal series $F(\lambda)$ via
\begin{equation}\label{def:lacD}
\bL_K(F(\lambda)):=F(\lambda)\big\vert_{
\lambda^n\mapsto 
\; \delta_{(n \Mod K),0}
\; \frac{n!}{\left(n/K\right)!}\;\lambda^{\left(n/K\right)}}\,.
\end{equation}
Consequently, the action of $\bL_K$ may be considered as a transformation of monomials in the spirit of Section~\ref{sec:sum}. Indeed,
\begin{subequations}\label{eq:lacDil}
\begin{align}
\bL_K\left(\cG_{1,L}(\lambda;x,y)\right)&=\sum_{n= 0}^{\infty}\frac{\lambda^n}{n!}\;\left(\delta_{(n \Mod K),0}\;\frac{n!}{(n/K)!}\;\frac{\lambda^{(n/K)}}{\lambda^n}\right)\; p_{n+L}(x,y)\\
&=\sum_{r=0}^{\infty}\frac{\lambda ^r}{r!} p_{r\cdot K+L}(x,y)=\cG_{K,L}(\lambda;x,y)\,,\label{eq:auxKa}
\end{align}
\end{subequations}
where in~\eqref{eq:auxKa}, $r=n/K$.\\

Note at this point that the operators $\bS_L$ and $\bL_K$ (for $K=1,2,\dotsc$ and $L=0,1,\dotsc$) allow to implement the derivation of arbitrary lacunary generating functions $\cG_{K,L}(\lambda;x,y)$ from the initial knowledge of the exponential generating function $\cG_{1,0}(\lambda;x,y)$ as defined in~\eqref{eq:LGFegf}.\\

Assume now that we are given either the exponential generating function $\cG_{1,0}(\lambda;x,y)$ of some polynomials $p_n(x,y)$, or the $L$-shifted version $\cG_{1,L}(\lambda;x,y)$ thereof, and introduce for notational simplicity the notation $\cG(\lambda;x,y)$ for either choice of generating function (in a slight abuse in light of previous notations). We thus assume that we are given an equation of the generic form 
\begin{equation}\label{eq:GFexcpansion}
\cG(\lambda;x,y)=\sum_{r=0}^{\infty}x^r\sum_{m=0}^{\infty}\frac{\lambda^{r+m}}{(r+m)!}\; g_{r\!,\,m}(y)\,.
\end{equation}
With these notational preparations, computing a $K$-fold lacunary dilatation of $\cG(\lambda;x,y)$ amounts to extracting the coefficients $g^{(K,0)}_{r\!,\,m}(y)$ in explicit form via expanding $\bL_K(\cG(\lambda;x,y))$ in powers of $x$, and then further in powers of $\lambda$. The only technical tool necessary is an elementary procedure to resolve the summations over $\Mod K$-classes involved in the definition.

\begin{restatable}{lem}{genKD}\label{lem:KD}
The explicit form for the $K$-fold lacunary dilatation $\bL_K(\cG(\lambda;x,y))$ ($K\in \bZ_{\geq 1}$)  of a generating function $\cG(\lambda;x,y)$ with expansion coefficients $g_{r,m}(y)$ reads 
\begin{equation}\label{eq:KLGF}
\begin{aligned}
\cG_{K,0}(\lambda;x,y)&=\bL_K\left(\cG(\lambda;x,y)\right)\\
&= 
\sum_{s=0}^{\infty} x^{s\cdot K}\sum_{q=0}^{\infty}\frac{\lambda^{s+q}}{(s+q)!}\;g_{s\cdot K,\,q\cdot K}(y)\\
&\quad 
+\sum_{\alpha=1}^{K-1}\sum_{s=0}^{\infty} x^{(s+1)\cdot K-\alpha}\sum_{q=0}^{\infty}\frac{\lambda^{s+q+1}}{(s+q+1)!}\; g_{(s+1)\cdot K-\alpha,\,q\cdot K+\alpha}\,.
\end{aligned}
\end{equation}
\end{restatable}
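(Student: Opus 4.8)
The plan is to apply the dilatation operator $\bL_K$ from~\eqref{def:lacD} directly to the expansion~\eqref{eq:GFexcpansion} and then reorganize the resulting doubly-indexed sum according to the residue of the $x$-exponent modulo $K$. Since $\bL_K$ is defined purely as a monomial substitution on powers of $\lambda$, the whole computation reduces to a reindexing exercise within the summability framework of Section~\ref{sec:sum}.

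First I would substitute~\eqref{eq:GFexcpansion} into $\bL_K$. The term indexed by $(r,m)$ carries the power $\lambda^{r+m}$, so the rule~\eqref{def:lacD} replaces it by $\delta_{(r+m \Mod K),0}\,\tfrac{(r+m)!}{((r+m)/K)!}\,\lambda^{(r+m)/K}$; the factor $(r+m)!$ so produced cancels the $1/(r+m)!$ already present, leaving
\begin{equation}
\bL_K\left(\cG(\lambda;x,y)\right)=\sum_{r=0}^{\infty}\sum_{m=0}^{\infty}\delta_{(r+m \Mod K),0}\;x^r\;\frac{\lambda^{(r+m)/K}}{((r+m)/K)!}\;g_{r,m}(y)\,.
\end{equation}
Before rearranging I would verify summability in the sense of Section~\ref{sec:sum}: for any fixed monomial $x^a\lambda^b$ the constraints $r=a$ and $(r+m)/K=b$ pin down $m=Kb-a$ uniquely, so at most one index pair contributes and local finiteness holds, licensing the reindexing below.

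The core step is the partition of the restricted index set $\{(r,m)\in\bZ_{\geq 0}^2 : K\mid (r+m)\}$ by the residue class of $r$ modulo $K$. If $r\equiv 0$, then $K\mid(r+m)$ forces $m\equiv 0$, and I would write $r=sK$, $m=qK$ with $s,q\geq 0$; this gives $(r+m)/K=s+q$ and reproduces the first sum in~\eqref{eq:KLGF}. Otherwise $r\equiv K-\alpha$ for a unique $\alpha\in\{1,\dots,K-1\}$, which forces $m\equiv\alpha$; I would then set $r=(s+1)K-\alpha$ and $m=qK+\alpha$ with $s,q\geq 0$, so that $(r+m)/K=s+q+1$, accounting for the shifted factorial $1/(s+q+1)!$ and the power $\lambda^{s+q+1}$ in the second sum. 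Summing over $\alpha$ completes the claimed expression.

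The main point requiring care—and essentially the only place the argument could go wrong—is confirming that this parametrization is a genuine bijection of index sets: each admissible pair $(r,m)$ lies in exactly one residue class of $r$, the value $\alpha$ (and hence the forced residue of $m$) is determined uniquely, and conversely every $(s,q,\alpha)$ in the stated ranges yields a distinct admissible $(r,m)$. In particular one must keep the case $r\equiv 0$ separate (it produces no $+1$ shift in the $\lambda$-exponent), while every nonzero residue class carries the uniform shift by one coming from $r+m=(s+q+1)K$. Once this accounting is checked, substituting the two reindexings back into the displayed sum yields~\eqref{eq:KLGF} verbatim.
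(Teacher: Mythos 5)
Your proof is correct and follows essentially the same route as the paper's: apply $\bL_K$ termwise to the expansion~\eqref{eq:GFexcpansion}, split the $(r,m)$-sum into residue classes of $r$ modulo $K$, and solve the congruence $K\mid(r+m)$ for $m$ --- your parametrization $r=(s+1)K-\alpha$, $m=qK+\alpha$ merely builds in from the start the relabeling $\alpha\mapsto K-\alpha$ that the paper performs as its final step. Your explicit local-finiteness check justifying the rearrangement is a sound addition that the paper leaves implicit via its summability framework of Section~\ref{sec:sum}.
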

\begin{proof}
The argument follows from splitting the summation over $r$ in $\bL_K(\cG(\lambda;x,y)$ (with $\cG(\lambda;x,y)$ expanded in the form described in~\eqref{eq:EGFb}) into $\Mod K$-classes, which consequently leads to a modification of the summation over the second index $m$. We refer the readers to Appendix~\ref{app:lemP} for the precise details. 
\end{proof}

In practical applications, it is often the case that the expansion coefficients $g_{r,m}$ have further structure, e.g.\ as in the case of the Hermite polynomials (discussed in the next section) that only the coefficients with second index $m$ even are non-zero. We thus present the following corollary for this specific scenario, which will play an important role in the explicit calculation of closed-form expressions for the lacunary generating functions of the Hermite polynomials. Note also that one could easily adapt our approach to other scenarios in which the second indices $m$ fulfill a $\Mod N$ constraint (for $N\in \bZ_{\geq 3}$ some non-negative integer) via an entirely analogous argument.

\begin{restatable}{cor}{genKDcor}\label{cor:KD}
The explicit form for the $K$-fold lacunary shift $\bL_K(\cG(\lambda;x,y))$ ($K\in \bZ_{\geq 1}$)  of a generating function $\cG(\lambda;x,y)$ with expansion coefficients $g_{r\!,\,m}(y)$ splits into a part which only contains summands with \emph{even} second indices $m$ of $g_{r,m}(y)$ (marked ``part $E$'') and a part involving odd second indices only (marked ``part $O$''). It reads for the case $K=2T$ (with $T\in \bZ_{\geq 1}$)
\begin{subequations}\label{eq:KLGFeven}
\begin{align}
\bL_{K=2T}\left(\cG(\lambda;x,y)\right)&= {\bigg[}^{\text{part $E$}}
\sum_{s=0}^{\infty} x^{s\cdot K}\sum_{q=0}^{\infty}
\frac{\lambda^{s+q}}{(s+q)!}\; g_{s\cdot K,\,q\cdot K}(y)\\
&\quad 
+\sum_{\beta=1}^{T-1}\sum_{s=0}^{\infty} x^{(s+1)\cdot K-2\beta}\sum_{q=0}^{\infty}\frac{\lambda^{s+q+1}}{(s+q+1)!}\; g_{(s+1)\cdot K-2\beta,\,q\cdot K+2\beta}\bigg]\\
&\quad+{\bigg[}^{\text{part $O$}}
\sum_{\beta=1}^T\sum_{s=0}^{\infty}x^{(s+1)\cdot K-2\beta+1}\sum_{q=0}^{\infty}\frac{\lambda^{s+1+1}}{(s+q+1)!}\; g_{(s+1)\cdot K-2\beta+1,\,q\cdot K+2\beta-1}\bigg]\,,
\end{align}
\end{subequations}
and for the case $K=2T+1$ (with $T\in \bZ_{\geq 1}$)
\begin{subequations}\label{eq:KLGFodd}
\begin{align}
\bL_{K=2T+1}\left(\cG(\lambda;x,y)\right)&= {\bigg[}^{\text{part $E$}}\sum_{s=0}^{\infty} x^{s\cdot K}\sum_{\ell=0}^{\infty}
\frac{\lambda^{s+2\ell}}{(s+2\ell)!}\; g_{s\cdot K,\,2\ell \cdot K}(y)\\
&\quad 
+\sum_{\beta=1}^{T}\sum_{s=0}^{\infty} x^{(s+1)\cdot K-2\beta}\sum_{\ell=0}^{\infty}\frac{\lambda^{s+2\ell+1}}{(s+2\ell+1)!}\; g_{(s+1)\cdot K-2\beta,\,2\ell\cdot K+2\beta}\\
&\quad 
+\sum_{\beta=1}^{T}\sum_{s=0}^{\infty} x^{(s+1)\cdot K-2\beta+1}\sum_{\ell=0}^{\infty}\frac{\lambda^{s+2\ell+2}}{(s+2\ell+2)!}\; g_{(s+1)\cdot K-2\beta+1,\,(2\ell+1)\cdot K+2\beta-1}\bigg]\\
&\quad+{\bigg[}^{\text{part $O$}}
\sum_{s=0}^{\infty} x^{s\cdot K}\sum_{\ell=0}^{\infty}
\frac{\lambda^{s+2\ell+1}}{(s+2\ell+1)!}\; g_{s\cdot K,\,(2\ell+1) \cdot K}(y)\\
&\quad 
+\sum_{\beta=1}^{T}\sum_{s=0}^{\infty} x^{(s+1)\cdot K-2\beta+1}
\sum_{\ell=0}^{\infty}\frac{\lambda^{s+2\ell+1}}{(s+2\ell+1)!}\; g_{(s+1)\cdot K-2\beta+1,\,(2\ell)\cdot K+2\beta-1}\\
&\quad 
+\sum_{\beta=1}^{T}\sum_{s=0}^{\infty} x^{(s+1)\cdot K-2\beta}\sum_{\ell=0}^{\infty}\frac{\lambda^{s+2\ell+2}}{(s+2\ell+2)!}\; g_{(s+1)\cdot K-2\beta,\,(2\ell+1)\cdot K+2\beta}\bigg]\,.
\end{align}
\end{subequations}
\end{restatable}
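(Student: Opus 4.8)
The plan is to start from the explicit expression for $\bL_K(\cG(\lambda;x,y))$ furnished by Lemma~\ref{lem:KD}, equation~\eqref{eq:KLGF}, and simply to sort its summands according to the parity of the second index of the coefficient $g_{r,m}(y)$ that each one carries. Recall that \eqref{eq:KLGF} is built from a ``diagonal'' contribution, whose coefficients are $g_{sK,qK}(y)$, together with a residue sum over $\alpha\in\{1,\dots,K-1\}$, whose coefficients are $g_{(s+1)K-\alpha,\,qK+\alpha}(y)$. The one elementary fact driving everything is that the parity of the relevant second index is the parity of $qK$ (for the diagonal term) or of $qK+\alpha$ (for the residue terms), and since $K$ itself may be even or odd the two cases are handled separately.

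For $K=2T$ the argument is immediate. Here $qK=2qT$ is even for every $q$, so the diagonal term always has even second index and lands in part~$E$; moreover the parity of $qK+\alpha$ coincides with that of $\alpha$. I would therefore split the range $1\le\alpha\le 2T-1$ into its even members $\alpha=2\beta$ (with $\beta=1,\dots,T-1$), assigned to part~$E$, and its odd members $\alpha=2\beta-1$ (with $\beta=1,\dots,T$), assigned to part~$O$. Substituting these values of $\alpha$ into \eqref{eq:KLGF}, with the summations over $s$ and $q$ left untouched, reproduces \eqref{eq:KLGFeven} term by term; no reindexing is needed.

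For $K=2T+1$ one has instead that $qK$ carries the parity of $q$, so both the diagonal term and each residue term may fall into either part depending on $q$. The plan here is to split $q$ by parity, writing $q=2\ell$ or $q=2\ell+1$, and simultaneously to split $\alpha$ into even values $2\beta$ and odd values $2\beta-1$ (each with $\beta=1,\dots,T$). For the residue terms the parity of $qK+\alpha$ equals that of $q+\alpha$, which produces four combinations: $(\alpha\text{ even},q\text{ even})$ and $(\alpha\text{ odd},q\text{ odd})$ give an even second index (part~$E$), whereas $(\alpha\text{ even},q\text{ odd})$ and $(\alpha\text{ odd},q\text{ even})$ give an odd one (part~$O$); the diagonal term contributes to part~$E$ when $q$ is even and to part~$O$ when $q$ is odd. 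Collecting the resulting six families and recording how the parity substitution for $q$ transforms the $\lambda$-exponent — $s+q\mapsto s+2\ell$ and $s+q+1\mapsto s+2\ell+1$ under $q=2\ell$, while $s+q\mapsto s+2\ell+1$ and $s+q+1\mapsto s+2\ell+2$ under $q=2\ell+1$, with the factorial denominators following suit — reproduces \eqref{eq:KLGFodd} line by line.

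The only genuine work lies in the bookkeeping for the odd case: one must track the parity substitutions for $q$ and for $\alpha$ at the same time and check, for each of the six families, that the induced power of $\lambda$, the induced factorial, and the two indices of $g$ match the stated formula. This is mechanical but easy to get wrong, so I expect the main obstacle to be organizational rather than conceptual; the even case, by contrast, follows at once from the observation that $qK+\alpha$ has the parity of $\alpha$.
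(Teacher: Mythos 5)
Your proposal is correct and takes essentially the same route as the paper's own proof: you specialize Lemma~\ref{lem:KD}, observe that for $K=2T$ the parity of the second index $qK+\alpha$ is that of $\alpha$ alone (so only the split $\alpha=2\beta$ with $1\leq\beta\leq T-1$ versus $\alpha=2\beta-1$ with $1\leq\beta\leq T$ is needed), while for $K=2T+1$ it is that of $q+\alpha$ (forcing the simultaneous parity splits $q=2\ell$ or $q=2\ell+1$ and $\alpha=2\beta$ or $\alpha=2\beta-1$), and your resulting six families, $\beta$-ranges, and induced $\lambda$-exponents and factorials coincide exactly with the constraint table the paper presents in Appendix~\ref{app:corDproof}.
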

\begin{proof}
The proof follows from a subdivision of summation ranges as detailed in Appendix~\ref{app:corDproof}.
\end{proof}

\section{Hermite polynomials}
\label{sec:HP}

Due to their paramount importance in special functions theory and in the development of operational techniques (cf.\ e.g.~\cite{dattoli1997evolution,dattoli2000generalized}), we will now present as an application of our general algorithm the computation of all higher order lacunary generating functions for the \emph{Hermite-Kamp\'{e} de F\'{e}riet polynomials}~\cite{kampe,dattoli1997evolution} (with $n\in \bZ_{\geq0}$):
\begin{equation}\label{eq:HPtwo}
H_n(x,y):=e^{y\frac{d^2}{dx^2}} x^n =n! \sum_{k=0}^{\lfloor \frac{n}{2}\rfloor}\frac{x^{n-2k}y^k}{(n-2k)!k!}\,.
\end{equation}
It is well-known that these polynomials possess the exponential generating function (compare~\eqref{eq:EGFhp1})
\begin{subequations}
\begin{align}
\cH_{1,0}(\lambda;x,y)&:=\sum_{n=0}^{\infty}
\frac{\lambda^n}{n!}
H_n(x,y)=e^{y\frac{d^2}{dx^2}} e^{\lambda x}\label{eq:HPlac1s0}\\
&=e^{\lambda x + \lambda^2 y}\,.
\end{align}
\end{subequations}
Here and from now on, we will take the notational convention that the generating functions $\cG_{K,L}(\lambda;x,y)$ for the two-variable Hermite polynomials $H_n(x,y)$ will be denoted $\cH_{K,L}(\lambda;x,y)$.\\

The above formula is a straightforward consequence of the Crofton identity~\cite{crofton79} (see also~\cite[Eq.~(I.3.17)]{dattoli1997evolution} and~\cite{dattoli2000generalized}),  according to which, for $m\in \bZ_{\geq 1}$, $\lambda$ a formal variable and $f(x),g(x)$ formal power series in $x$, we have the following equality of operators (which are the expressions between $[\dotsc]$):
\begin{equation}\label{eq:crofton}
\left[\exp\left(\lambda \frac{d^m}{dx^m}\right)f(x)\right]g(x)=\left[ f\left(x + m\lambda \frac{d^{m-1}}{dx^{m-1}}\right)\exp\left(\lambda\frac{d^m}{dx^m}\right)\right]g(x)\,.
\end{equation}
As a second important consequence of this identity, we find a significant reduction of the complexity of determining the lacunary generating functions of the Hermite polynomials: surprisingly enough, instead of having to compute explicitly the shifts $\cH_{1,L}(\lambda;x,y)$ of the exponential generating function $\cH_{1,0}(\lambda;x,y)$, followed by application of Corollary~\ref{cor:KD} to compute $\cH_{K,L}(\lambda;x,y)$, we can take advantage of operational techniques to derive an exponential generating for \emph{all} shifts in closed form. Note that for some ranges of parameters (i.e.\ for $K=1,2$ and $L\in \bZ_{\geq 1}$), formulae equivalent to our result~\eqref{eq:propHLGF} below have been presented already in~~\cite{dattoli1997evolution,dattoli1998operational}.

\begin{prop}\label{prop:LshiftHLGF}
Define the exponential generating function\footnote{In a sense, $\cR_K(\mu;\lambda;x,y)$ is the exponential generating function of other types of exponential generating functions, a slightly unusual, yet rather useful construct. Note also that these manipulations are admissible in the framework according to Section~\ref{sec:sum}.} of the lacunary shifts $\cH_{K,L}(\lambda;x,y)$ of the $K$-tuple lacunary generating function $\cH_{K,0}(\lambda;x,y)$ of the polynomials $H_n(x,y)$ as
\begin{equation}
\begin{aligned}
\cR_K(\mu;\lambda;x,y)&:= \sum_{L=0}^{\infty}\frac{\mu^L}{L!} \cH_{K,L}(\lambda;x,y)\\
&\overset{\eqref{eq:lacDil}}{=}\sum_{L=0}^{\infty}\frac{\mu^L}{L!} \bL_K\left(\cH_{1,L}(\lambda;x,y)\right)\\
&\overset{\eqref{eq:lacShift}}{=}\sum_{L=0}^{\infty}\frac{\mu^L}{L!} \bL_K\left(\left(\tfrac{\partial}{\partial\lambda}\right)^L\left(\cH_{1,0}(\lambda;x,y)\right)\right)\\
&=\bL_K\left(e^{\mu \frac{\partial}{\partial \lambda}}\cH_{1,0}(\lambda;x,y)\right)\,.
\end{aligned}
\end{equation}
Then by virtue of the Crofton identity and of a semi-linear normal-ordering technique~\cite{dattoli1997evolution}, one finds that
\begin{equation}\label{eq:propHLGF}
\boxed{\cR_K(\mu;\lambda;x,y)=e^{\mu x+\mu^2 y}\cH_{K,0}(\lambda;x+2\mu y,y)=\cH_{1,0}(\mu;x,y)\cH_{K,0}(\lambda;x+2\mu y,y)\,.}
\end{equation}
Consequently, as one of the key results of this paper, we find that
\begin{equation}\label{eq:HKLexpl}
\cH_{K,L}(\lambda;x,y)=\left[\left(\tfrac{\partial}{\partial\mu}\right)^L \cR_K(\mu;\lambda;x,y)\right]\bigg\vert_{\mu
\mapsto0}
\end{equation}
\begin{proof}
Partially evaluating the formula for $\cH_K(\mu;\lambda;x,y)$,
\begin{subequations}
\begin{align}
\cR_K(\mu;\lambda;x,y)&=\bL_K\left(e^{\mu \frac{\partial}{\partial \lambda}}\cH_{1,0}(\lambda;x,y)\right)\\
&\overset{\eqref{eq:HPlac1s0}}{=} 
\bL_K\left(e^{\mu \frac{\partial}{\partial \lambda}}e^{y\frac{d^2}{dx^2}}e^{\lambda x}\right)\\
&=\bL_K\left(e^{y\frac{d^2}{dx^2}}e^{\mu \frac{\partial}{\partial \lambda}}e^{\lambda x}\right)\\
&=\bL_K\left(e^{y\frac{d^2}{dx^2}}e^{\mu x}e^{\lambda x}\right)\,,\label{eq:shiftPropA}
\end{align}
\end{subequations}
and applying the Crofton identity~\eqref{eq:crofton} (to move the operator $\exp(y\frac{d^2}{dx^2})$ past the operator $\exp(\mu x)$) yields
\begin{equation}
\cR_K(\mu;\lambda;x,y)=\bL_K\left(e^{\mu(x+2y\frac{d}{dx})}e^{y\frac{d^2}{dx^2}}e^{\lambda x}\right)\,.
\end{equation}
At this point, note that the operator $\exp(\mu(x+2y\tfrac{d}{dx}))$ is independent of the formal variable $\lambda$, whence this operator and the lacunary dilatation operator $\bL_K$ commute:
\begin{subequations}
\begin{align}
\cR_K(\mu;\lambda;x,y)&=e^{\mu(x+2y\frac{d}{dx})}\left(\bL_K\left(e^{y\frac{d^2}{dx^2}}e^{\lambda x}\right)\right)\\
&\overset{\eqref{eq:HPlac1s0}}{=}
e^{\mu(x+2y\frac{d}{dx})}\left(\bL_K\left(\cH_{1,0}(\lambda;x,y)\right)\right)\\
&\overset{\eqref{eq:lacDil}}{=}e^{\mu(x+2y\frac{d}{dx})} \cH_{K,0}(\lambda;x,y)\,.
\end{align}
\end{subequations}
This formula may be further improved via applying a certain semi-linear normal ordering technique (see Appendix~\ref{app:SLNO} for the precise details) to obtain the final result as stated, whence for every formal power series $f(x)$ (and for $\mu$ a formal variable),
\begin{equation}
\begin{aligned}
e^{\mu(x+2y\frac{d}{dx})}f(x)&=g(\mu;x)f(T(\mu;x))\\
g(\mu;x)&=e^{\mu x+\mu^2 y}\,,\quad T(\mu;x)=x+2\mu y\,.
\end{aligned}
\end{equation}
\end{proof}
\end{prop}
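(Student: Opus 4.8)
The plan is to evaluate $\cR_K(\mu;\lambda;x,y)$ starting from its operator form $\bL_K\big(e^{\mu\frac{\partial}{\partial\lambda}}\cH_{1,0}(\lambda;x,y)\big)$, by commuting the $\lambda$-translation operator and the Gaussian operator $e^{y\frac{d^2}{dx^2}}$ past one another and then disentangling the resulting first-order operator in $x$. First I would insert the operational form $\cH_{1,0}(\lambda;x,y)=e^{y\frac{d^2}{dx^2}}e^{\lambda x}$ from~\eqref{eq:HPlac1s0}. Because $e^{\mu\frac{\partial}{\partial\lambda}}$ acts only on $\lambda$ while $e^{y\frac{d^2}{dx^2}}$ acts only on $x$, the two operators commute; moreover $e^{\mu\frac{\partial}{\partial\lambda}}$ is the translation $\lambda\mapsto\lambda+\mu$, so $e^{\mu\frac{\partial}{\partial\lambda}}e^{\lambda x}=e^{\mu x}e^{\lambda x}$. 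This reduces the expression to $\bL_K\big(e^{y\frac{d^2}{dx^2}}e^{\mu x}e^{\lambda x}\big)$.

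The next key step is to invoke the Crofton identity~\eqref{eq:crofton} with $m=2$ in order to move $e^{y\frac{d^2}{dx^2}}$ to the right of the multiplication operator $e^{\mu x}$; with $f(x)=e^{\mu x}$ this replaces the pair by $e^{\mu(x+2y\frac{d}{dx})}\,e^{y\frac{d^2}{dx^2}}$. The crucial observation is that the newly produced operator $e^{\mu(x+2y\frac{d}{dx})}$ no longer depends on $\lambda$, so it commutes with the lacunary dilatation $\bL_K$, which acts solely on the powers of $\lambda$ in the sense of Section~\ref{sec:sum}. Pulling it out in front of $\bL_K$ and using $\bL_K(\cH_{1,0}(\lambda;x,y))=\cH_{K,0}(\lambda;x,y)$ from~\eqref{eq:lacDil} then yields $\cR_K(\mu;\lambda;x,y)=e^{\mu(x+2y\frac{d}{dx})}\cH_{K,0}(\lambda;x,y)$.

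The heart of the argument, and the step I expect to be the main obstacle, is to determine how $e^{\mu(x+2y\frac{d}{dx})}$ acts on an arbitrary power series in $x$. I would settle this by a normal-ordering (disentangling) computation: setting $a=\mu x$ and $b=2\mu y\frac{d}{dx}$, one finds that $[a,b]=2\mu^2 y\,[x,\tfrac{d}{dx}]=-2\mu^2 y$ is a central scalar, so the Baker--Campbell--Hausdorff series truncates and gives $e^{a+b}=e^{a}e^{b}\,e^{-\frac12[a,b]}=e^{\mu x+\mu^2 y}\,e^{2\mu y\frac{d}{dx}}$. Since $e^{2\mu y\frac{d}{dx}}$ is the shift $x\mapsto x+2\mu y$, this establishes the semi-linear normal-ordering rule $e^{\mu(x+2y\frac{d}{dx})}f(x)=e^{\mu x+\mu^2 y}f(x+2\mu y)$, i.e.\ $g(\mu;x)=e^{\mu x+\mu^2 y}$ and $T(\mu;x)=x+2\mu y$. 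Applying this with $f(x)=\cH_{K,0}(\lambda;x,y)$ produces the boxed identity~\eqref{eq:propHLGF}, and the factor $e^{\mu x+\mu^2 y}$ is recognized as $\cH_{1,0}(\mu;x,y)$ via~\eqref{eq:HPlac1s0}.

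Finally, formula~\eqref{eq:HKLexpl} is an immediate consequence of the definition $\cR_K(\mu;\lambda;x,y)=\sum_{L=0}^{\infty}\tfrac{\mu^L}{L!}\cH_{K,L}(\lambda;x,y)$: applying $\big(\tfrac{\partial}{\partial\mu}\big)^L$ and setting $\mu\mapsto 0$ extracts the coefficient of $\tfrac{\mu^L}{L!}$, which is exactly $\cH_{K,L}(\lambda;x,y)$. Throughout, the only delicate point is the legitimacy of interchanging $\bL_K$ with the $\lambda$-independent operator $e^{\mu(x+2y\frac{d}{dx})}$ and of the various rearrangements of doubly-infinite operator series; these are all justified within the summability framework of Section~\ref{sec:sum}.
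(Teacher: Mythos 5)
Your proof is correct, and its overall architecture coincides with the paper's: insert the operational form $\cH_{1,0}(\lambda;x,y)=e^{y\frac{d^2}{dx^2}}e^{\lambda x}$, commute $e^{\mu\frac{\partial}{\partial\lambda}}$ past $e^{y\frac{d^2}{dx^2}}$, apply the Crofton identity~\eqref{eq:crofton} with $m=2$ to produce $e^{\mu(x+2y\frac{d}{dx})}$, observe that this $\lambda$-independent operator commutes with $\bL_K$, and then disentangle. The one place where you genuinely depart from the paper is the disentangling step, which you correctly identify as the heart of the matter. The paper invokes its general semi-linear normal-ordering technique (Appendix~\ref{app:SLNO}): for $D=q(x)\frac{d}{dx}+v(x)$ one solves the initial value problem $\partial_\mu T=q(T)$, $T(0;x)=x$ and $\partial_\mu\ln g=v(T)$, $g(0;x)=1$, which here (with $q=2y$, $v=x$) gives $T=x+2\mu y$ and $g=e^{\mu x+\mu^2 y}$. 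You instead exploit the special structure of this particular $D$: writing $a=\mu x$, $b=2\mu y\frac{d}{dx}$, the commutator $[a,b]=-2\mu^2 y$ is a central scalar, so the Baker--Campbell--Hausdorff expansion truncates, $e^{a+b}=e^{a}e^{b}e^{-\frac{1}{2}[a,b]}=e^{\mu x+\mu^2 y}\,e^{2\mu y\frac{d}{dx}}$, and the surviving exponential is the shift $x\mapsto x+2\mu y$. Your sign bookkeeping is right ($[x,\frac{d}{dx}]=-1$, hence $e^{-\frac12[a,b]}=e^{\mu^2 y}$), and the operator ordering correctly leaves the multiplication factor on the left, matching $g(\mu;x)f(T(\mu;x))$. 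The trade-off: your BCH route is more elementary and self-contained for this Proposition, since it needs no ODE machinery; the paper's technique is more general, covering arbitrary polynomial $q(x),v(x)$ (and connecting to the Sheffer-type normal-ordering literature it cites), at the cost of quoting a black-box lemma. Both yield identical $g$ and $T$, and your extraction of $\cH_{K,L}$ via $(\partial/\partial\mu)^L\big\vert_{\mu\mapsto 0}$ and the identification $e^{\mu x+\mu^2 y}=\cH_{1,0}(\mu;x,y)$ are exactly as in the paper.
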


Due to this proposition, our task of determining closed-form equations for all higher order lacunary generating functions $\cH_{K,L}(\lambda;x,y)$ effectively collapses to the slightly simpler task of determining the lacunary generating functions $\cH_{K,0}(\lambda;x,y)$ for $K\in \bZ_{\geq 1}$. We will employ the generic algorithm as presented in Lemma~\ref{lem:KD} in the variant as presented in Corollary~\ref{cor:KD}, since the expansion coefficients $h^{(1,0)}_{r\!,\,m}(y)$ of the exponential generating function $\cH_{1,0}(\lambda;x,y)$ defined via
\begin{equation}
\cH_{1,0}(\lambda;x,y):=\sum_{n=0}^{\infty}\frac{\lambda^n}{n!} H_n(x,y)=e^{\lambda x+\lambda^2 y}
=\sum_{r=0}^{\infty}x^r \sum_{m=0}^{\infty}\frac{\lambda^{r+m}}{(r+m)!}\; h^{(1,0)}_{r\!,\,m}(y)
\end{equation}
fulfill
\begin{equation}\label{eq:HPcoeffs}
h^{(1,0)}_{r\!,\,m}(y)=\delta_{(m\Mod 2),0}\; \frac{(r+m)!y^{\frac{m}{2}}}{r!(\frac{m}{2})!}\,.
\end{equation}
This entails in particular that our task further reduces to applying the explicit formula for the ``even parts'' (marked ``part $E$'') given in Corollary~\ref{cor:KD} for $K=2T$ and $K=2T+1$, respectively.\\

It will further prove convenient to recall the \emph{Gauss-Legendre multiplication formula}~\cite[Eq.~5.5.6]{NIST:DLMF} for Gamma functions (for $n\cdot z\not\in \bZ_{\leq 0}$),
\begin{equation}\label{eq:GMF}
\Gamma(nz)=n^{nz-\frac{1}{2}}(2\pi)^{\frac{(1-n)}{2}}\prod_{j=0}^{n-1}\Gamma\left(z+\frac{j}{n}\right)\,.
\end{equation} 
More precisely, we will make use of the following variant of this formula: for $n(s+x),nx\in \bC\setminus \bZ_{\leq 0}$, $n\in \bZ_{\geq 2}$ and $s\in \bZ_{\geq 0}$, we have that
\begin{equation}\label{eq:GMFC}
\Gamma(n(s+x))=\left(n^{s\cdot n}\right)\Gamma(nx)\;\prod_{j=0}^{n-1}\left(x+\frac{j}{n}\right)_s\,,
\end{equation}
with the Pochhammer symbol $(a)_b$ defined according to the convention
\[
(a)_b:=\frac{\Gamma(a+b)}{\Gamma(a)}\,.
\]

Let us also fix notations for hypergeometric functions~\cite{prudnikov1992integrals,beals2016special} as
\begin{equation}
\pFq{p}{q}{a_1,\dotsc,a_p}{b_1,\dotsc,b_q}{z}:=\sum_{s=0}^{\infty}\frac{z^s}{s!}\frac{(a_1)_s\dotsc (a_p)_s}{(b_1)_s\dotsc (b_q)_s}\,,
\end{equation}
and let us denote sequences in the form $\seq{f_i}{1\leq i\leq n}$, such as in
\begin{equation}
\pFq{p}{q}{\seq{a_i}{1\leq i\leq p}}{\seq{b_j}{1\leq i\leq q}}{z}:=\pFq{p}{q}{a_1,\dotsc,a_p}{b_1,\dotsc,b_q}{z}\,.
\end{equation}

With these preparations, we are now in a position to state another key result of this paper, whence a closed-form expression for all $K$-tuple lacunary generating functions $\cH_{K,0}(\lambda;x,y)$ of the Hermite polynomials $H_n(x,y)$ in infinite-series form:

\begin{restatable}{thm}{HPlacK}\label{thm:HPlacK}
The $K$-tuple lacunary generating functions $\cH_{K,0}(\lambda;x,y)$ read for $K=2T$ ($T\in \bZ_{\geq 1}$)
\begin{equation}
\begin{aligned}\label{eq:lacHPeven}
\cH_{K=2T,0}(\lambda;x,y)&=
\sum_{s=0}^{\infty}\tfrac{\lambda^s}{s!}\;x^{2T s}\;
\pFq{(2T-1)}{(T-1)}{%
\seq{s+\tfrac{j+1}{2T}}{0\leq j\leq 2T-2}}{%
\seq{\tfrac{\ell+1}{T}}{0\leq \ell\leq T-2}}{\lambda(4Ty)^T}\\
&\quad 
+\sum_{\beta=1}^{T-1}\sum_{s=0}^{\infty}\tfrac{\lambda^{s+1}}{(s+1)!}\; 
x^{2T (s+1)-2\beta}\; y^{\beta}\;\left(\tfrac{(2T(s+1))!}{(2T(s+1)-2\beta)!\beta!}\right)\cdot\\
&\qquad\cdot
\pFq{(2T-1)}{(T-1)}{\seq{s+1+\tfrac{j+1}{2T}}{0\leq j\leq 2T-2}}{%
\seq{\tfrac{\beta+\ell+1}{T}}{\stackrel{0\leq\ell\leq T-1}{\ell\neq T-1-\beta}}}{\lambda(4Ty)^T}\,,
\end{aligned}
\end{equation}
and for $K=2T+1$ (with $T\in \bZ_{\geq1}$)
\begin{equation}
\begin{aligned}\label{eq:lacHPodd}
\cH_{K=2T+1,0}(\lambda;x,y)&=
\sum_{s=0}^{\infty}\tfrac{\lambda^s}{s!}\;x^{K s}\;
\pFq{(2K-2)}{(K-1)}{\seq{\tfrac{s}{2}+\tfrac{j+1}{2K}}{\stackrel{0\leq j\leq 2K-2}{j\neq K-1}}}{\seq{\tfrac{\ell+1}{K}}{0\leq \ell \leq K-2}}{\tfrac{\lambda^2(4Ky)^K}{4}}\\
&\quad 
+\sum_{\beta=1}^{T}\sum_{s=0}^{\infty}\tfrac{\lambda^{s+1}}{(s+1)!}\; 
x^{K(s+1)-2\beta}\; y^{\beta}\;\left(\tfrac{(K(s+1))!}{(K(s+1)-2\beta)!\beta!}\right)\cdot\\
&\qquad\cdot
\pFq{(2K-2)}{(K-1)}{\seq{\tfrac{s+1}{2}+\tfrac{j+1}{2K}}{\stackrel{0\leq j\leq 2K-2}{j\neq K-1}}}{\seq{\tfrac{\beta+\ell+1}{K}}{\stackrel{0\leq \ell\leq K-1}{\ell\neq K-1-\beta}}}{\tfrac{\lambda^2(4Ky)^K}{4}}\\
&\quad 
+\sum_{\beta=1}^{T}
\sum_{s=0}^{\infty}\tfrac{\lambda^{s+2}}{(s+2)!}\; 
x^{K(s+2)-2(T+\beta)}\; y^{T+\beta}\;
\left(\tfrac{(K(s+2))!}{(K(s+2)-2(T+\beta))!(T+\beta)!}\right)\cdot\\
&\qquad\cdot
\pFq{(2K-2)}{(K-1)}{\seq{\tfrac{s+2}{2}+\tfrac{j+1}{2K}}{\stackrel{0\leq j\leq 2K-2}{j\neq K-1}}}{
\seq{\tfrac{T+\beta+\ell+1}{K}}{\stackrel{0\leq \ell\leq K-1}{\ell\neq T-\beta}}}{\tfrac{\lambda^2(4Ky)^K}{4}}\,.
\end{aligned}
\end{equation}
\end{restatable}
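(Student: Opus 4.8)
The plan is to specialise the general resummation identity of Corollary~\ref{cor:KD} to the Hermite expansion coefficients \eqref{eq:HPcoeffs} and then to recognise each resulting inner sum as a generalized hypergeometric series by means of the Gauss--Legendre multiplication formula \eqref{eq:GMFC}. By Proposition~\ref{prop:LshiftHLGF} it suffices to produce the unshifted functions $\cH_{K,0}(\lambda;x,y)$, so the entire task reduces to evaluating $\bL_K(\cH_{1,0}(\lambda;x,y))$ with $h^{(1,0)}_{r,m}(y)=\tfrac{(r+m)!\,y^{m/2}}{r!\,(m/2)!}$ for $m$ even and $0$ for $m$ odd. The first step is to exploit this parity: every summand in Corollary~\ref{cor:KD} whose second index of $g$ is odd drops out. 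For $K=2T$ the single ``part $O$'' summand of \eqref{eq:KLGFeven} carries the odd second index $qK+2\beta-1$ and vanishes, leaving the two ``part $E$'' summands, which will yield the two terms of \eqref{eq:lacHPeven}. For $K=2T+1$ a parity check shows that the three ``part $O$'' summands of \eqref{eq:KLGFodd} (second indices $(2\ell+1)K$, $2\ell K+2\beta-1$, $(2\ell+1)K+2\beta$) are all odd and vanish, while the three ``part $E$'' summands (second indices $2\ell K$, $2\ell K+2\beta$, $(2\ell+1)K+2\beta-1$) are all even and survive, yielding the three terms of \eqref{eq:lacHPodd}.

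Next I would substitute \eqref{eq:HPcoeffs} into each surviving summand and resum. For the leading ``part $E$'' term of the even case, $\sum_s x^{sK}\sum_q \tfrac{\lambda^{s+q}}{(s+q)!}h_{sK,qK}$, one uses $h_{sK,qK}=\tfrac{(K(s+q))!\,y^{qT}}{(sK)!\,(qT)!}$ and normalises so that the $q=0$ summand is $\tfrac{\lambda^s}{s!}x^{2Ts}$; the ratio of the $(q{+}1)$-st to the $q$-th inner term is then
\[
\lambda\,y^{T}\,\frac{\prod_{j=1}^{K}\bigl(K(s+q)+j\bigr)}{(s+q+1)\,\prod_{i=1}^{T}\bigl(qT+i\bigr)}\,.
\]
The Gauss--Legendre formula \eqref{eq:GMFC} factors the numerator as $K^{K}\prod_{j=1}^{K}\bigl(s+\tfrac{j}{K}+q\bigr)$ and the denominator product as $T^{T}\prod_{i=1}^{T}\bigl(q+\tfrac{i}{T}\bigr)$. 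The factor $j=K$ upstairs ($=s+q+1$) cancels the explicit $(s+q+1)$, the factor $i=T$ downstairs ($=q+1$) supplies the mandatory hypergeometric denominator, and the leftover constant is $\lambda y^{T}K^{K}T^{-T}=\lambda(4Ty)^{T}$; this exhibits the sum as a ${}_{(2T-1)}F_{(T-1)}$ with upper parameters $s+\tfrac{j}{2T}$ ($1\le j\le 2T-1$) and lower parameters $\tfrac{i}{T}$ ($1\le i\le T-1$), matching the first line of \eqref{eq:lacHPeven}. The $\beta$-shifted ``part $E$'' term is identical, the shift $m\mapsto qK+2\beta$ translating the lower parameters to $\tfrac{\beta+i}{T}$ and excising the one equal to $q+1$ (namely $i=T-\beta$, i.e.\ $\ell=T-1-\beta$), which reproduces the second line of \eqref{eq:lacHPeven}.

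The odd case follows the same template, but now the inner index $\ell$ enters through $\lambda^{s+2\ell}$, so the hypergeometric argument acquires a square and becomes $\tfrac{\lambda^{2}(4Ky)^{K}}{4}$. For the leading term the relevant ratio has numerator $\prod_{j=1}^{2K}\bigl(K(s+2\ell)+j\bigr)$ and denominator $(s+2\ell+1)(s+2\ell+2)\prod_{i=1}^{K}(\ell K+i)$. After \eqref{eq:GMFC} the two numerator factors $j=K$ and $j=2K$ cancel $(s+2\ell+1)(s+2\ell+2)=4\bigl(\tfrac{s}{2}+\tfrac12+\ell\bigr)\bigl(\tfrac{s}{2}+1+\ell\bigr)$, producing the $\tfrac14$; the factor $i=K$ downstairs supplies $(\ell+1)$; and the accumulated constant $\lambda^2 y^K (2K)^{2K}K^{-K}/4=\lambda^2(4Ky)^K/4$ fixes the argument. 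The excision of $j=K$ is precisely the constraint $j\neq K-1$ in the upper-parameter list of \eqref{eq:lacHPodd}. The two $\beta$-shifted terms are handled identically; the one with second index $(2\ell+1)K+2\beta-1$ additionally carries a half-step in $\lambda$, which accounts for its leading power $\lambda^{s+2}$ and its exponent $y^{T+\beta}$.

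The step I expect to be the main obstacle is precisely this hypergeometric bookkeeping rather than the resummation itself. After applying \eqref{eq:GMFC} one must identify, among the $K$ (resp.\ $2K$) numerator factors, the one (resp.\ two) that cancel the explicit denominator factors $(s+q+1)$ (resp.\ $(s+2\ell+1)(s+2\ell+2)$), the one that furnishes the obligatory $(q+1)$ (resp.\ $(\ell+1)$), and the accumulated numerical constant that fixes the hypergeometric argument; one must then keep exact track of the induced exclusions $\ell\neq T-1-\beta$, $j\neq K-1$ and $\ell\neq T-\beta$ in the parameter lists and match, term by term, the leading prefactors such as $\tfrac{(K(s+1))!}{(K(s+1)-2\beta)!\,\beta!}$ against \eqref{eq:lacHPeven} and \eqref{eq:lacHPodd}. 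It is in correctly pairing these exclusions and prefactors with the three distinct ``part $E$'' summands of the odd case that essentially all the care is required; the detailed index manipulations are naturally relegated to the appendix.
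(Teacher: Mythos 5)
Your proposal is correct and follows essentially the same route as the paper: specializing Corollary~\ref{cor:KD} to the Hermite coefficients~\eqref{eq:HPcoeffs} (so that only the ``part $E$'' summands survive by parity), then using the Gauss--Legendre multiplication formula~\eqref{eq:GMFC} to convert the factorial ratios into Pochhammer products and read off the hypergeometric parameters, arguments $\lambda(4Ty)^T$ resp.\ $\tfrac{\lambda^2(4Ky)^K}{4}$, and the exclusions $\ell\neq T-1-\beta$, $j\neq K-1$, $\ell\neq T-\beta$. Your parity bookkeeping and ratio-of-consecutive-terms computation match the paper's worked $K=4$ example in every detail.
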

\begin{proof}
The proof follows by specializing Corollary~\ref{cor:KD} to the case at hand and by making repeated use of equation~\eqref{eq:GMFC} in order to transform the resulting terms into the form presented in the theorem.
\end{proof}

Since the precise details of the above calculations might be of independent interest to some of the readers, we present the case $K=4$ in the form of a worked example. Based on formula~\eqref{eq:HPcoeffs} for the expansion coefficients $h^{(1,0)}_{r\!,\,m}$ of the exponential generating function $\cH_{1,0}(\lambda;x,y)$ of the Hermite polynomials $H_n(x,y)$, and taking advantage of Corollary~\ref{cor:KD} (whence making use of the formula for even $K=2T$, of which moreover only the parts marked $part~E$ contribute in the case at hand), we obtain the following expression:
\begin{equation}
\begin{aligned}\label{eq:H40step1}
\cH_{4,0}(\lambda;x,y)&=\sum_{s=0}^{\infty} x^{4s}\sum_{q=0}^{\infty}\frac{\lambda^{s+q}}{(s+q)!}\; h^{(1,0)}_{4s\!,\,4q}(y)\\
&\quad+\sum_{s=0}^{\infty} x^{4s+2}\sum_{q=0}^{\infty}\frac{\lambda^{s+q+1}}{(s+q+1)!}\; h^{(1,0)}_{4s+2\!,\,4q+2}(y)\\
&=\sum_{s=0}^{\infty} x^{4s}\sum_{q=0}^{\infty}\frac{\lambda^{s+q}(4(s+q))!y^{2q}}{(s+q)!(4s)!(2q)!}\\
&\quad+\sum_{s=0}^{\infty} x^{4s+2}\sum_{q=0}^{\infty}\frac{\lambda^{s+q+1}(4(s+q+1))!y^{2q+1}}{(s+q+1)!(4s+2)!(2q+1)!}\,.
\end{aligned}
\end{equation}
Applying formula~\eqref{eq:GMFC} repeatedly, we find the auxiliary relations
\begin{align*}
(4(s+q))!&=4^{4q}\Gamma(4s+1)\prod_{j=0}^3\left(s+\tfrac{j+1}{4}\right)_q=4^{4q}(4s)!\left(s+\tfrac{3+1}{4}\right)_q \prod_{j=0}^{{\color{blue}2}}\left(s+\tfrac{j+1}{4}\right)_q\\
&=\tfrac{4^{4q}(4s)!(s+q)!}{q!} \prod_{j=0}^{2}\left(s+\tfrac{j+1}{4}\right)_q \qquad(\text{since }\left(s+1\right)_q=\tfrac{\Gamma(s+q+1)}{\Gamma(s+1)})\\
(2q)!
&=2^{2q}\Gamma(1)\prod_{j=0}^1\left(\tfrac{j+1}{2}\right)_q
=4^q \left(\tfrac{1}{2}\right)_q q! \qquad(\text{since }\left(1\right)_q
=\tfrac{\Gamma(q+1)}{\Gamma(1)}=q!)\,,\\
\end{align*}%
and analogously for $(4(s+q+1))!$ and $(2q+1)!$, respectively. Inserting these results into~\eqref{eq:H40step1}, we recover the formula for $\cH_{4,0}(\lambda;x,y)$ as presented in Theorem~\ref{thm:HPlacK} (see also Table~\ref{tab:HPlacFull}):
\begin{equation}
\begin{aligned}\label{eq:H40step2}
\cH_{4,0}(\lambda;x,y)&=
\sum_{s=0}^{\infty} x^{4s}\sum_{q=0}^{\infty}\tfrac{\lambda^{s+q}\;4^{4q}(4s)!(s+q)!y^{2q}\prod_{j=0}^{2}\left(s+\frac{j+1}{4}\right)_q}{(s+q)!(4s)!s! 4^q \left(\frac{1}{2}\right)_q q!}\\
&\quad
+\sum_{s=0}^{\infty} x^{4s+2}\sum_{q=0}^{\infty}\tfrac{\lambda^{s+q+1}\;4^{4q} (4(s+1))!(s+q+1)! y^{2q+1}\prod_{j=0}^{2}\left(s+1+\frac{j+1}{4}\right)_q}{(s+q+1)!(4s+2)!(s+1)!4^q q! \left(\frac{3}{2}\right)_q}\\
&=
\sum_{s=0}^{\infty} \tfrac{\lambda^s}{s!}\; x^{4s}\sum_{q=0}^{\infty}\tfrac{(\lambda (2^3y)^2)^{q}}{q!}\;\tfrac{\prod_{j=0}^{2}\left(s+\frac{j+1}{4}\right)_q}{ \left(\frac{1}{2}\right)_q}\\
&\quad
+\sum_{s=0}^{\infty} \tfrac{\lambda^{s+1}}{(s+1)!}\;x^{4s+2}y\;\left(\tfrac{(4(s+1))!}{(4(s+1)-2)!}\right)
\sum_{q=0}^{\infty}\tfrac{(\lambda (2^3y)^2)^{q}}{q!}\;\tfrac{\prod_{j=0}^{2}\left(s+1+\frac{j+1}{4}\right)_q}{ \left(\frac{3}{2}\right)_q}\\
&=
\sum_{s=0}^{\infty} \tfrac{\lambda^s}{s!}\; x^{4s}
\pFq{3}{1}{s+\tfrac{1}{4},s+\tfrac{2}{4},s+\tfrac{3}{4}}{\tfrac{1}{2}}{\lambda y^22^6}\\
&\quad
+\sum_{s=0}^{\infty} \tfrac{\lambda^{s+1}}{(s+1)!}\;x^{4s+2}y\;\left(\tfrac{(4(s+1))!}{(4(s+1)-2)!}\right)\;\cdot\\
&\qquad \cdot\;
\pFq{3}{1}{s+1+\tfrac{1}{4},s+1+\tfrac{2}{4},s+1+\tfrac{3}{4}}{\tfrac{3}{2}}{\lambda y^22^6}\,.
\end{aligned}
\end{equation}

Note in particular the appearance of Hermite polynomial expansion coefficients in the formulae for $\cH_{K,0}(\lambda;x,y)$ as presented in Theorem~\ref{thm:HPlacK} -- for example, the fraction of factorials appearing in the second to last line of~\eqref{eq:H40step2} coincides with the coefficient of the monomial $x^{4s+2}y$ of the Hermite polynomial $H_{4(s+1)}(x,y)$ (compare~\eqref{eq:HPtwo}),
\[
H_{4(s+1)}(x,y)=\sum_{k=0}^{2(s+1)}\frac{(4(s+1))!}{(4(s+1)-2k)!k!}\; x^{4(s+1)-2k}y^k\,.
\]

Combining the results of Theorem~\ref{thm:HPlacK} with the results of Proposition~\ref{prop:LshiftHLGF}, we have at our disposal explicit formulae for all higher order lacunary generating functions of the Hermite polynomials $H_n(x,y)$.

\begin{restatable}{cor}{HPlacKL}\label{cor:HPlacKL}
The $K$-tuple $L$-shifted lacunary generating functions $\cH_{K,L}(\lambda;x,y)$ read for $K=2T$ ($T\in \bZ_{\geq 1}$) and $L\in \bZ_{\geq 0}$
\begin{equation}
\begin{aligned}\label{eq:lacHPevenLshifted}
&\cH_{K=2T,L}(\lambda;x,y)\\
&\quad=
\sum_{s=0}^{\infty}\tfrac{\lambda^s}{s!}\;
\left(\sum_{q=0}^L q! \binom{L}{q}\binom{K s}{q}H_{L-q}(x,y)\;x^{K s-q}(2y)^q\right)
\;\cdot\\
&\quad\qquad \cdot\;
\pFq{(K-1)}{(T-1)}{%
\seq{s+\tfrac{j+1}{K}}{0\leq j\leq K-2}}{%
\seq{\tfrac{\ell+1}{T}}{0\leq \ell\leq T-2}}{\lambda(2Ky)^T}\\
&\quad 
+\sum_{\beta=1}^{T-1}\sum_{s=0}^{\infty}\tfrac{\lambda^{s+1}}{(s+1)!}\; 
\bigg(\sum_{q=0}^L q!\binom{L}{q}\binom{K (s+1)-2\beta}{q} H_{L-q}(x,y)\;\cdot\\
&\hphantom{+\sum_{\beta=1}^{T-1}\sum_{s\geq 0}\tfrac{\lambda^{s+1}}{(s+1)!}}\qquad\cdot\;x^{K (s+1)-2\beta-q}\; 2^q\;y^{\beta+q}\bigg)\;
\left(\tfrac{(K(s+1))!}{(K(s+1)-2\beta)!\beta!}\right)\;\cdot\\
&\hphantom{+\sum_{\beta=1}^{T-1}\sum_{s\geq 0}\tfrac{\lambda^{s+1}}{(s+1)!}}\quad\cdot\;
\pFq{(K-1)}{(T-1)}{\seq{s+1+\tfrac{j+1}{K}}{0\leq j\leq K-2}}{%
\seq{\tfrac{\beta+\ell+1}{T}}{\stackrel{0\leq\ell\leq T-1}{\ell\neq T-1-\beta}}}{\lambda(2Ky)^T}\,,
\end{aligned}
\end{equation}
and for $K=2T+1$ (with $T\in \bZ_{\geq1}$)
\begin{equation}
\begin{aligned}\label{eq:lacHPoddLshifted}
&\cH_{K=2T+1,L}(\lambda;x,y)\\
&\quad=
\sum_{s=0}^{\infty}\tfrac{\lambda^s}{s!}\;\left(\sum_{q=0}^L
q!\binom{L}{q}\binom{K s}{q}H_{L-q}(x,y)\;x^{K s-q}(2y)^q\right)\;\cdot\\
&\quad\qquad \cdot\;\pFq{(2K-2)}{(K-1)}{\seq{\tfrac{s}{2}+\tfrac{j+1}{2K}}{\stackrel{0\leq j\leq 2K-2}{j\neq K-1}}}{\seq{\tfrac{\ell+1}{K}}{0\leq \ell \leq K-2}}{\tfrac{\lambda^2(4Ky)^K}{4}}\\
&\quad 
+\sum_{\beta=1}^{T}\sum_{s=0}^{\infty}\tfrac{\lambda^{s+1}}{(s+1)!}\; 
\bigg(\sum_{q=0}^L q!\binom{L}{q}\binom{K(s+1)-2\beta}{q}H_{L-q}(x,y)\; \cdot\\
&\hphantom{+\sum_{\beta=1}^{T}\sum_{s\geq 0}\tfrac{\lambda^{s+1}}{(s+1)!}\; 
}\cdot\;x^{K(s+1)-2\beta-q}\; 2^q\;y^{\beta+q}\bigg)\;\left(\tfrac{(K(s+1))!}{(K(s+1)-2\beta)!\beta!}\right)\;\cdot\\
&\hphantom{+\sum_{\beta=1}^{T}\sum_{s\geq 0}\tfrac{\lambda^{s+1}}{(s+1)!}\; 
}\cdot\;\pFq{(2K-2)}{(K-1)}{\seq{\tfrac{s+1}{2}+\tfrac{j+1}{2K}}{\stackrel{0\leq j\leq 2K-2}{j\neq K-1}}}{\seq{\tfrac{\beta+\ell+1}{K}}{\stackrel{0\leq \ell\leq K-1}{\ell\neq K-1-\beta}}}{\tfrac{\lambda^2(4Ky)^K}{4}}\\
&\quad 
+\sum_{\beta=1}^{T}
\sum_{s=0}^{\infty}\tfrac{\lambda^{s+2}}{(s+2)!}\; 
\bigg(\sum_{q=0}^L q!\binom{L}{q}\binom{K(s+2)-2(T+\beta)}{q} H_{L-q}(x,y) \;
\cdot\\
&\hphantom{+\sum_{\beta=1}^{T}
\sum_{s\geq 0}\tfrac{\lambda^{s+2}}{(s+2)!}\; }\quad\cdot
\;x^{K(s+2)-2(T+\beta)-q}\;2^q\; y^{T+\beta+q}\bigg)\;
\left(\tfrac{(K(s+2))!}{(K(s+2)-2(T+\beta))!(T+\beta)!}\right)\;\cdot\\
&\hphantom{+\sum_{\beta=1}^{T}
\sum_{s\geq 0}\tfrac{\lambda^{s+2}}{(s+2)!}\;}\quad\cdot\;
\pFq{(2K-2)}{(K-1)}{\seq{\tfrac{s+2}{2}+\tfrac{j+1}{2K}}{\stackrel{0\leq j\leq 2K-2}{j\neq K-1}}}{
\seq{\tfrac{T+\beta+\ell+1}{K}}{\stackrel{0\leq \ell\leq K-1}{\ell\neq T-\beta}}}{\tfrac{\lambda^2(4Ky)^K}{4}}\,.
\end{aligned}
\end{equation}
\end{restatable}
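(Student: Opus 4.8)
The plan is to combine Theorem~\ref{thm:HPlacK} with Proposition~\ref{prop:LshiftHLGF}. By~\eqref{eq:HKLexpl} we have $\cH_{K,L}(\lambda;x,y)=\big[\big(\tfrac{\partial}{\partial\mu}\big)^L\cR_K(\mu;\lambda;x,y)\big]\big\vert_{\mu\mapsto0}$, and by~\eqref{eq:propHLGF},
\begin{equation}
\cR_K(\mu;\lambda;x,y)=e^{\mu x+\mu^2 y}\,\cH_{K,0}(\lambda;x+2\mu y,y)\,.
\end{equation}
First I would substitute into the second factor the explicit series for $\cH_{K,0}$ from Theorem~\ref{thm:HPlacK}, carrying out the replacement $x\mapsto x+2\mu y$. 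The crucial structural observation is that $\mu$ enters \emph{only} through the exponential prefactor $e^{\mu x+\mu^2 y}$ and through the monomials $x^p$ (where $p$ is the relevant $s,\beta$-dependent exponent in each group of terms); all generalized hypergeometric factors and all $\lambda$- and $y$-dependent coefficients are independent of $\mu$ and therefore ride along unchanged. This reduces the whole computation to a single universal Leibniz-rule evaluation, applied uniformly to both parities $K=2T$ and $K=2T+1$; the only difference between the two cases is which of the explicit series of Theorem~\ref{thm:HPlacK} one feeds in.

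Next I would evaluate $\big(\tfrac{\partial}{\partial\mu}\big)^L\big\vert_{\mu=0}$ of each summand by the Leibniz rule, using the two elementary ingredients
\begin{equation}
\left(\tfrac{\partial}{\partial\mu}\right)^{q}(x+2\mu y)^p\Big\vert_{\mu=0}=q!\binom{p}{q}(2y)^q x^{p-q}\,,
\end{equation}
and, since $e^{\mu x+\mu^2 y}=\cH_{1,0}(\mu;x,y)=\sum_{n\geq0}\tfrac{\mu^n}{n!}H_n(x,y)$,
\begin{equation}
\left(\tfrac{\partial}{\partial\mu}\right)^{L-q}e^{\mu x+\mu^2 y}\Big\vert_{\mu=0}=H_{L-q}(x,y)\,.
\end{equation}
The Leibniz rule then converts each power $x^p$ occurring in Theorem~\ref{thm:HPlacK} into
\begin{equation}
\sum_{q=0}^{L}q!\binom{L}{q}\binom{p}{q}\,H_{L-q}(x,y)\,x^{p-q}(2y)^q\,,
\end{equation}
which is exactly the bracketed Hermite-weighted factor in~\eqref{eq:lacHPevenLshifted} and~\eqref{eq:lacHPoddLshifted}, read off with $p=Ks$ for the first group of terms, $p=K(s+1)-2\beta$ for the second, and $p=K(s+2)-2(T+\beta)$ for the third (odd case). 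The trivial rewriting $(2y)^q y^{\beta}=2^q y^{\beta+q}$ (and its analogue $(2y)^q y^{T+\beta}=2^q y^{T+\beta+q}$) matches the $2^q$ and combined $y$-powers displayed in the statement.

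Finally I would reassemble the summation, with every $\lambda$-power, factorial prefactor, and hypergeometric function transcribed verbatim from Theorem~\ref{thm:HPlacK}; here one checks the cosmetic identifications $2Ky=4Ty$, $K-1=2T-1$, etc., so that the parameter lists and arguments of the $\pFq{}{}{}{}{}$ factors coincide with those in the corollary, which is routine. The main obstacle is the legitimacy of applying $\big(\tfrac{\partial}{\partial\mu}\big)^L\big\vert_{\mu=0}$ term-by-term to the infinite double (resp.\ triple) series. I would dispatch this within the summability framework of Section~\ref{sec:sum}: the operation $f\mapsto\big(\tfrac{\partial}{\partial\mu}\big)^L f\big\vert_{\mu=0}$ equals $L!$ times extraction of the coefficient of $\mu^L$, a linear map that commutes with summable sums; verifying that the family of summands is locally finite (each monomial in $x,y,\lambda$ receives contributions from only finitely many summands, because distinct $s$ and $\beta$ produce distinct total degrees) then justifies the interchange and completes the proof.
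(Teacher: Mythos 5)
Your proposal is correct and takes essentially the same route as the paper's own proof in Appendix~\ref{app:proofLacKL}: both combine \eqref{eq:HKLexpl} and \eqref{eq:propHLGF} with Theorem~\ref{thm:HPlacK} and evaluate $\left(\tfrac{\partial}{\partial\mu}\right)^L\big\vert_{\mu=0}$ term by term via the Leibniz rule, using $H_{L-q}(x,y)$ from the prefactor $\cH_{1,0}(\mu;x,y)$ and $q!\binom{p}{q}(2y)^q x^{p-q}$ from the substituted monomials $(x+2\mu y)^p$, with the hypergeometric factors riding along unchanged. Your explicit summability justification for the term-by-term differentiation is a minor extra precaution that the paper leaves implicit (appealing to Section~\ref{sec:sum}), not a genuinely different method.
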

\begin{proof}
See Appendix~\ref{app:proofLacKL} for the details. 
\end{proof}

For the readers' convenience, we present some explicit results for $\cH_{K,0}(\lambda;x,y)$ in Table~\ref{tab:HPlacFull}, as well as several examples of shifted lacunary generating functions $\cH_{K,L}(\lambda;x,y)$ (for $K=3,4$ and $L=1,2,3$) obtained via Corollary~\ref{cor:HPlacKL} in Table~\ref{tab:HPlacShift}. To the best of our knowledge, all of these results past $K=2$ are new.

\section{Conclusion}

Taking advantage of the summability properties of exponential generating functions of polynomial sequences, we provide novel re-summation techniques combined with operational methods in order to compute higher order lacunary generating functions of polynomials of all types. The application of these techniques to the case of two-variable Hermite polynomials $H_n(x,y)$ furnished explicit formulae for an infinite set of $K$-tuple $L$-shifted lacunary generating functions $\cH_{K,L}(\lambda;x,y)$, which subsume the known expressions for $\cH_{1,L}(\lambda;x,y)$~\cite{prudnikov1992integrals}, $\cH_{2,0}(\lambda;x,y)$~\cite{foata1981some,foataStrehl1984,dattoli2017operational}
 and $\cH_{3,0}(\lambda;x,y)$~\cite{gessel2005triple}. For the convenience of readers who would like to perform experimental mathematics around our methods, we provide some explicit examples for $\cH_{K,0}(\lambda;x,y)$ for $K=1,\dotsc,10$ in Table~\ref{tab:HPlacFull} as well as some Maple code in Appendix~\ref{app:Maple}.

{\footnotesize
%\bibliographystyle{unsrt}
%\bibliography{refs}

}

\footnotesize

\begin{landscape}
\begin{longtable}{L}
  \caption{\small\ Lacunary generating functions $\mathcal{H}_{K,0}(\lambda;x;y)$ for the bi-variate Hermite polynomials $H_n(x,y)$ and $K=1\dotsc 10$.\label{tab:HPlacFull}}\\
\hline\\
\endfirsthead
{\tablename\ \thetable\  \textit{-- continued from previous page}} \\
\hline\\
\endhead
\hline \textit{Continued on next page} \\
\endfoot
\hline
\endlastfoot
\mathcal{H}_{1,0}(\lambda;x,y)=e^{\lambda x+ \lambda^2 y}\\[1.5em]\\
\mathcal{H}_{2,0}(\lambda;x,y)=\sum_{s=0}^{\infty} \tfrac{\lambda^s}{s!} \; x^{2s} \; \pFq{1}{0}{s+\frac{1}{2}}{-}{\lambda  y\; 2^2}
=\frac{1}{\sqrt{1-4\lambda y}}e^{\frac{x^2\lambda}{\sqrt{1-4\lambda y}}}\\[1.5em]
\\
\mathcal{H}_{3,0}(\lambda;x,y)=\sum_{s=0}^{\infty} \tfrac{\lambda^s}{s!} \; x^{3s} \; \pFq{4}{2}{\frac{s}{2} + \frac{1}{6},\frac{s}{2} + \frac{1}{3},\frac{s}{2} + \frac{2}{3},\frac{s}{2} + \frac{5}{6}}{\frac{1}{3},\frac{2}{3}}{ \lambda^2 y^{3}\; 2^4 3^3}\\[1.5em]
\hphantom{\mathcal{H}_{3,0}(\lambda;x,y)=} + \sum_{s=0}^{\infty} \tfrac{\lambda^{s+1}}{(s+1)!} \; x^{3 s+ 1} \; y \; \left(\tfrac{(3(s+1))!}{(3(s+1)- 2)! }\right) \; \pFq{4}{2}{\frac{s+1}{2} + \frac{1}{6},\frac{s+1}{2} + \frac{1}{3},\frac{s+1}{2} + \frac{2}{3},\frac{s+1}{2} + \frac{5}{6}}{\frac{2}{3},\frac{4}{3}}{ \lambda^2 y^{3} \; 2^4 3^3}\\[1.5em]
\hphantom{\mathcal{H}_{3,0}(\lambda;x,y)=}+\sum_{s=0}^{\infty} \tfrac{\lambda^{s+2}}{(s+2)!}\; x^{3s + 2} \; y^2 \; \left(\tfrac{(3(s+2))!}{(3(s+2) - 4)!2!}\right) \; \pFq{4}{2}{\frac{s+2}{2} + \frac{1}{6},\frac{s+2}{2} + \frac{1}{3},\frac{s+2}{2} + \frac{2}{3},\frac{s+2}{2} + \frac{5}{6}}{\frac{4}{3},\frac{5}{3}}{\lambda^2 y^{3} \; 2^4 3^3}\\[1.5em]
\\
\mathcal{H}_{4,0}(\lambda;x,y)=\sum_{s=0}^{\infty} \tfrac{\lambda^s}{s!} \; x^{4s} \; \pFq{3}{1}{s+\frac{1}{4},s+\frac{1}{2},s+\frac{3}{4}}{\frac{1}{2}}{\lambda  y^2\; 2^6}\\[1.5em]
\hphantom{\mathcal{H}_{4,0}(\lambda;x,y)=}+\sum_{s=0}^{\infty} \tfrac{\lambda^{s+1}}{(s+1)!} \;x^{4 s+2}\;y\; \left(\tfrac{(4(s+1))!}{(4(s+1) -2)!}\right)\; \pFq{3}{1}{(s+1) + \frac{1}{4},(s+1) + \frac{1}{2},(s+1) + \frac{3}{4}}{\frac{3}{2}}{\lambda  y^2\; 2^6}\\[1.5em]
\\
\mathcal{H}_{5,0}(\lambda;x,y)=\sum_{s=0}^{\infty} \tfrac{\lambda^s}{s!} \; x^{5s} \; \pFq{8}{4}{\frac{s}{2} + \frac{1}{10},\frac{s}{2} + \frac{1}{5},\frac{s}{2} + \frac{3}{10},\frac{s}{2} + \frac{2}{5},\frac{s}{2} + \frac{3}{5},\frac{s}{2} + \frac{7}{10},\frac{s}{2} + \frac{4}{5},\frac{s}{2} + \frac{9}{10}}{\frac{1}{5},\frac{2}{5},\frac{3}{5},\frac{4}{5}}{ \lambda^2 y^{5}\; 2^8 5^5}\\[1.5em]
\hphantom{\mathcal{H}_{5,0}(\lambda;x,y)=} + \sum_{s=0}^{\infty} \tfrac{\lambda^{s+1}}{(s+1)!} \; x^{5 s+ 3} \; y \; \left(\tfrac{(5(s+1))!}{(5(s+1)- 2)! }\right) \; \pFq{8}{4}{\frac{s+1}{2} + \frac{1}{10},\frac{s+1}{2} + \frac{1}{5},\frac{s+1}{2} + \frac{3}{10},\frac{s+1}{2} + \frac{2}{5},\frac{s+1}{2} + \frac{3}{5},\frac{s+1}{2} + \frac{7}{10},\frac{s+1}{2} + \frac{4}{5},\frac{s+1}{2} + \frac{9}{10}}{\frac{2}{5},\frac{3}{5},\frac{4}{5},\frac{6}{5}}{ \lambda^2 y^{5} \; 2^8 5^5}\\[1.5em]
\hphantom{\mathcal{H}_{5,0}(\lambda;x,y)=} + \sum_{s=0}^{\infty} \tfrac{\lambda^{s+1}}{(s+1)!} \; x^{5 s+ 1} \; y^2 \; \left(\tfrac{(5(s+1))!}{(5(s+1)- 4)! 2!}\right) \; \pFq{8}{4}{\frac{s+1}{2} + \frac{1}{10},\frac{s+1}{2} + \frac{1}{5},\frac{s+1}{2} + \frac{3}{10},\frac{s+1}{2} + \frac{2}{5},\frac{s+1}{2} + \frac{3}{5},\frac{s+1}{2} + \frac{7}{10},\frac{s+1}{2} + \frac{4}{5},\frac{s+1}{2} + \frac{9}{10}}{\frac{3}{5},\frac{4}{5},\frac{6}{5},\frac{7}{5}}{ \lambda^2 y^{5} \; 2^8 5^5}\\[1.5em]
\hphantom{\mathcal{H}_{5,0}(\lambda;x,y)=}+\sum_{s=0}^{\infty} \tfrac{\lambda^{s+2}}{(s+2)!}\; x^{5s + 4} \; y^3 \; \left(\tfrac{(5(s+2))!}{(5(s+2) - 6)!3!}\right) \; \pFq{8}{4}{\frac{s+2}{2} + \frac{1}{10},\frac{s+2}{2} + \frac{1}{5},\frac{s+2}{2} + \frac{3}{10},\frac{s+2}{2} + \frac{2}{5},\frac{s+2}{2} + \frac{3}{5},\frac{s+2}{2} + \frac{7}{10},\frac{s+2}{2} + \frac{4}{5},\frac{s+2}{2} + \frac{9}{10}}{\frac{4}{5},\frac{6}{5},\frac{7}{5},\frac{8}{5}}{\lambda^2 y^{5} \; 2^8 5^5}\\[1.5em]
\hphantom{\mathcal{H}_{5,0}(\lambda;x,y)=}+\sum_{s=0}^{\infty} \tfrac{\lambda^{s+2}}{(s+2)!}\; x^{5s + 2} \; y^4 \; \left(\tfrac{(5(s+2))!}{(5(s+2) - 8)!4!}\right) \; \pFq{8}{4}{\frac{s+2}{2} + \frac{1}{10},\frac{s+2}{2} + \frac{1}{5},\frac{s+2}{2} + \frac{3}{10},\frac{s+2}{2} + \frac{2}{5},\frac{s+2}{2} + \frac{3}{5},\frac{s+2}{2} + \frac{7}{10},\frac{s+2}{2} + \frac{4}{5},\frac{s+2}{2} + \frac{9}{10}}{\frac{6}{5},\frac{7}{5},\frac{8}{5},\frac{9}{5}}{\lambda^2 y^{5} \; 2^8 5^5}\\[1.5em]
\\
\mathcal{H}_{6,0}(\lambda;x,y)=\sum_{s=0}^{\infty} \tfrac{\lambda^s}{s!} \; x^{6s} \; \pFq{5}{2}{s+\frac{1}{6},s+\frac{1}{3},s+\frac{1}{2},s+\frac{2}{3},s+\frac{5}{6}}{\frac{1}{3},\frac{2}{3}}{\lambda  y^3\; 2^6 3^3}\\[1.5em]
\hphantom{\mathcal{H}_{6,0}(\lambda;x,y)=}+\sum_{s=0}^{\infty} \tfrac{\lambda^{s+1}}{(s+1)!} \;x^{6 s+4}\;y\; \left(\tfrac{(6(s+1))!}{(6(s+1) -2)!}\right)\; \pFq{5}{2}{(s+1) + \frac{1}{6},(s+1) + \frac{1}{3},(s+1) + \frac{1}{2},(s+1) + \frac{2}{3},(s+1) + \frac{5}{6}}{\frac{2}{3},\frac{4}{3}}{\lambda  y^3\; 2^6 3^3}\\[1.5em]
\hphantom{\mathcal{H}_{6,0}(\lambda;x,y)=}+\sum_{s=0}^{\infty} \tfrac{\lambda^{s+1}}{(s+1)!} \;x^{6 s+2}\;y^2\; \left(\tfrac{(6(s+1))!}{2!(6(s+1) -4)!}\right)\; \pFq{5}{2}{(s+1) + \frac{1}{6},(s+1) + \frac{1}{3},(s+1) + \frac{1}{2},(s+1) + \frac{2}{3},(s+1) + \frac{5}{6}}{\frac{4}{3},\frac{5}{3}}{\lambda  y^3\; 2^6 3^3}\\[1.5em]
\\
\mathcal{H}_{7,0}(\lambda;x,y)
=\sum_{s=0}^{\infty} \tfrac{\lambda^s}{s!} \; x^{7s} \; 
\pFq{12}{6}{\frac{s}{2} + \frac{1}{14},\frac{s}{2} + \frac{2}{14},
\dotsc\,,
%\frac{s}{2} + \frac{3}{14},\frac{s}{2} + \frac{2}{7},\frac{s}{2} + \frac{5}{14},\frac{s}{2} + \frac{3}{7},\frac{s}{2} + \frac{4}{7},\frac{s}{2} + \frac{9}{14},\frac{s}{2} + \frac{5}{7},\frac{s}{2} + \frac{11}{14},\frac{s}{2} + \frac{6}{7},
\frac{s}{2} + \frac{13}{14}}{\frac{1}{7},\frac{2}{7},\frac{3}{7},\frac{4}{7},\frac{5}{7},\frac{6}{7}}{ \lambda^2 y^{7}\; 2^{12} 7^7}\\[1.5em]
\hphantom{\mathcal{H}_{7,0}(\lambda;x,y)=} 
+ \sum_{s=0}^{\infty} \tfrac{\lambda^{s+1}}{(s+1)!} \; x^{7 s+ 5} \; y \; \left(\tfrac{(7(s+1))!}{(7(s+1)- 2)! }\right) \; 
\pFq{12}{6}{\frac{s+1}{2} + \frac{1}{14},\frac{s+1}{2} + \frac{2}{14},
\dotsc\,,
%\frac{s+1}{2} + \frac{3}{14},\frac{s+1}{2} + \frac{2}{7},\frac{s+1}{2} + \frac{5}{14},\frac{s+1}{2} + \frac{3}{7},\frac{s+1}{2} + \frac{4}{7},\frac{s+1}{2} + \frac{9}{14},\frac{s+1}{2} + \frac{5}{7},\frac{s+1}{2} + \frac{11}{14},\frac{s+1}{2} + \frac{6}{7},
\frac{s+1}{2} + \frac{13}{14}}{\frac{2}{7},\frac{3}{7},\frac{4}{7},\frac{5}{7},\frac{6}{7},\frac{8}{7}}{ \lambda^2 y^{7} \; 2^{12} 7^7}\\[1.5em]
\hphantom{\mathcal{H}_{7,0}(\lambda;x,y)=} + \sum_{s=0}^{\infty} \tfrac{\lambda^{s+1}}{(s+1)!} \; x^{7 s+ 3} \; y^2 \; \left(\tfrac{(7(s+1))!}{(7(s+1)- 4)! 2!}\right) \; 
\pFq{12}{6}{\frac{s+1}{2} + \frac{1}{14},\frac{s+1}{2} + \frac{2}{14},
\dotsc\,,
%\frac{s+1}{2} + \frac{3}{14},\frac{s+1}{2} + \frac{2}{7},\frac{s+1}{2} + \frac{5}{14},\frac{s+1}{2} + \frac{3}{7},\frac{s+1}{2} + \frac{4}{7},\frac{s+1}{2} + \frac{9}{14},\frac{s+1}{2} + \frac{5}{7},\frac{s+1}{2} + \frac{11}{14},\frac{s+1}{2} + \frac{6}{7},
\frac{s+1}{2} + \frac{13}{14}}{\frac{3}{7},\frac{4}{7},\frac{5}{7},\frac{6}{7},\frac{8}{7},\frac{9}{7}}{ \lambda^2 y^{7} \; 2^{12} 7^7}\\[1.5em]
\hphantom{\mathcal{H}_{7,0}(\lambda;x,y)=} + \sum_{s=0}^{\infty} \tfrac{\lambda^{s+1}}{(s+1)!} \; x^{7 s+ 1} \; y^3 \; \left(\tfrac{(7(s+1))!}{(7(s+1)- 6)! 3!}\right) \; 
\pFq{12}{6}{\frac{s+1}{2} + \frac{1}{14},\frac{s+1}{2} + \frac{2}{14},
\dotsc\,,
%\frac{s+1}{2} + \frac{3}{14},\frac{s+1}{2} + \frac{2}{7},\frac{s+1}{2} + \frac{5}{14},\frac{s+1}{2} + \frac{3}{7},\frac{s+1}{2} + \frac{4}{7},\frac{s+1}{2} + \frac{9}{14},\frac{s+1}{2} + \frac{5}{7},\frac{s+1}{2} + \frac{11}{14},\frac{s+1}{2} + \frac{6}{7},
\frac{s+1}{2} + \frac{13}{14}}{\frac{4}{7},\frac{5}{7},\frac{6}{7},\frac{8}{7},\frac{9}{7},\frac{10}{7}}{ \lambda^2 y^{7} \; 2^{12} 7^7}\\[1.5em]
\hphantom{\mathcal{H}_{7,0}(\lambda;x,y)=}+\sum_{s=0}^{\infty} \tfrac{\lambda^{s+2}}{(s+2)!}\; x^{7s + 6} \; y^4 \; \left(\tfrac{(7(s+2))!}{(7(s+2) - 8)!4!}\right) \; 
\pFq{12}{6}{\frac{s+2}{2} + \frac{1}{14},\frac{s+2}{2} + \frac{2}{14},
\dotsc\,,
%\frac{s+2}{2} + \frac{3}{14},\frac{s+2}{2} + \frac{2}{7},\frac{s+2}{2} + \frac{5}{14},\frac{s+2}{2} + \frac{3}{7},\frac{s+2}{2} + \frac{4}{7},\frac{s+2}{2} + \frac{9}{14},\frac{s+2}{2} + \frac{5}{7},\frac{s+2}{2} + \frac{11}{14},\frac{s+2}{2} + \frac{6}{7},
\frac{s+2}{2} + \frac{13}{14}}{\frac{5}{7},\frac{6}{7},\frac{8}{7},\frac{9}{7},\frac{10}{7},\frac{11}{7}}{\lambda^2 y^{7} \; 2^{12} 7^7}\\[1.5em]
\hphantom{\mathcal{H}_{7,0}(\lambda;x,y)=}+\sum_{s=0}^{\infty} \tfrac{\lambda^{s+2}}{(s+2)!}\; x^{7s + 4} \; y^5 \; \left(\tfrac{(7(s+2))!}{(7(s+2) - 10)!5!}\right) \;
\pFq{12}{6}{\frac{s+2}{2} + \frac{1}{14},\frac{s+2}{2} + \frac{2}{14},
\dotsc\,,
%\frac{s+2}{2} + \frac{3}{14},\frac{s+2}{2} + \frac{2}{7},\frac{s+2}{2} + \frac{5}{14},\frac{s+2}{2} + \frac{3}{7},\frac{s+2}{2} + \frac{4}{7},\frac{s+2}{2} + \frac{9}{14},\frac{s+2}{2} + \frac{5}{7},\frac{s+2}{2} + \frac{11}{14},\frac{s+2}{2} + \frac{6}{7},
\frac{s+2}{2} + \frac{13}{14}}{\frac{6}{7},\frac{8}{7},\frac{9}{7},\frac{10}{7},\frac{11}{7},\frac{12}{7}}{\lambda^2 y^{7} \; 2^{12} 7^7}\\[1.5em]
\hphantom{\mathcal{H}_{7,0}(\lambda;x,y)=}+\sum_{s=0}^{\infty} \tfrac{\lambda^{s+2}}{(s+2)!}\; x^{7s + 2} \; y^6 \; \left(\tfrac{(7(s+2))!}{(7(s+2) - 12)!6!}\right) \; 
\pFq{12}{6}{\frac{s+2}{2} + \frac{1}{14},\frac{s+2}{2} + \frac{2}{14},
\dotsc\,,
%\frac{s+2}{2} + \frac{3}{14},\frac{s+2}{2} + \frac{2}{7},\frac{s+2}{2} + \frac{5}{14},\frac{s+2}{2} + \frac{3}{7},\frac{s+2}{2} + \frac{4}{7},\frac{s+2}{2} + \frac{9}{14},\frac{s+2}{2} + \frac{5}{7},\frac{s+2}{2} + \frac{11}{14},\frac{s+2}{2} + \frac{6}{7},
\frac{s+2}{2} + \frac{13}{14}}{\frac{8}{7},\frac{9}{7},\frac{10}{7},\frac{11}{7},\frac{12}{7},\frac{13}{7}}{\lambda^2 y^{7} \; 2^{12} 7^7}\\[1.5em]
\\
\mathcal{H}_{8,0}(\lambda;x,y)=\sum_{s=0}^{\infty} \tfrac{\lambda^s}{s!} \; x^{8s} \; \pFq{7}{3}{s+\frac{1}{8},s+\frac{1}{4},s+\frac{3}{8},s+\frac{1}{2},s+\frac{5}{8},s+\frac{3}{4},s+\frac{7}{8}}{\frac{1}{4},\frac{1}{2},\frac{3}{4}}{\lambda  y^4\; 2^{16}}\\[1.5em]
\hphantom{\mathcal{H}_{8,0}(\lambda;x,y)=}+\sum_{s=0}^{\infty} \tfrac{\lambda^{s+1}}{(s+1)!} \;x^{8 s+6}\;y\; \left(\tfrac{(8(s+1))!}{(8(s+1) -2)!}\right)\; \pFq{7}{3}{(s+1) + \frac{1}{8},(s+1) + \frac{1}{4},(s+1) + \frac{3}{8},(s+1) + \frac{1}{2},(s+1) + \frac{5}{8},(s+1) + \frac{3}{4},(s+1) + \frac{7}{8}}{\frac{1}{2},\frac{3}{4},\frac{5}{4}}{\lambda  y^4\; 2^{16}}\\[1.5em]
\hphantom{\mathcal{H}_{8,0}(\lambda;x,y)=}+\sum_{s=0}^{\infty} \tfrac{\lambda^{s+1}}{(s+1)!} \;x^{8 s+4}\;y^2\; \left(\tfrac{(8(s+1))!}{2!(8(s+1) -4)!}\right)\; \pFq{7}{3}{(s+1) + \frac{1}{8},(s+1) + \frac{1}{4},(s+1) + \frac{3}{8},(s+1) + \frac{1}{2},(s+1) + \frac{5}{8},(s+1) + \frac{3}{4},(s+1) + \frac{7}{8}}{\frac{3}{4},\frac{5}{4},\frac{3}{2}}{\lambda  y^4\; 2^{16}}\\[1.5em]
\hphantom{\mathcal{H}_{8,0}(\lambda;x,y)=}+\sum_{s=0}^{\infty} \tfrac{\lambda^{s+1}}{(s+1)!} \;x^{8 s+2}\;y^3\; \left(\tfrac{(8(s+1))!}{3!(8(s+1) -6)!}\right)\; \pFq{7}{3}{(s+1) + \frac{1}{8},(s+1) + \frac{1}{4},(s+1) + \frac{3}{8},(s+1) + \frac{1}{2},(s+1) + \frac{5}{8},(s+1) + \frac{3}{4},(s+1) + \frac{7}{8}}{\frac{5}{4},\frac{3}{2},\frac{7}{4}}{\lambda  y^4\; 2^{16}}\\[1.5em]
\\
\mathcal{H}_{9,0}(\lambda;x,y)=\sum_{s=0}^{\infty} \tfrac{\lambda^s}{s!} \; x^{9s} \; 
\pFq{16}{8}{\frac{s}{2} + \frac{1}{18},\frac{s}{2} + \frac{2}{18},
\dotsc\,,
%\frac{s}{2} + \frac{1}{6},\frac{s}{2} + \frac{2}{9},\frac{s}{2} + \frac{5}{18},\frac{s}{2} + \frac{1}{3},\frac{s}{2} + \frac{7}{18},\frac{s}{2} + \frac{4}{9},\frac{s}{2} + \frac{5}{9},\frac{s}{2} + \frac{11}{18},\frac{s}{2} + \frac{2}{3},\frac{s}{2} + \frac{13}{18},\frac{s}{2} + \frac{7}{9},\frac{s}{2} + \frac{5}{6},\frac{s}{2} + \frac{8}{9},
\frac{s}{2} + \frac{17}{18}}{\frac{1}{9},\frac{2}{9},\frac{1}{3},\frac{4}{9},\frac{5}{9},\frac{2}{3},\frac{7}{9},\frac{8}{9}}{ \lambda^2 y^{9}\; 2^{16} 3^{18}}\\[1.5em]
\hphantom{\mathcal{H}_{9,0}(\lambda;x,y)=} + \sum_{s=0}^{\infty} \tfrac{\lambda^{s+1}}{(s+1)!} \; x^{9 s+ 7} \; y \; \left(\tfrac{(9(s+1))!}{(9(s+1)- 2)! }\right) \; 
\pFq{16}{8}{\frac{s+1}{2} + \frac{1}{18},\frac{s+1}{2} + \frac{2}{18},
\dotsc\,,
%\frac{s+1}{2} + \frac{1}{6},\frac{s+1}{2} + \frac{2}{9},\frac{s+1}{2} + \frac{5}{18},\frac{s+1}{2} + \frac{1}{3},\frac{s+1}{2} + \frac{7}{18},\frac{s+1}{2} + \frac{4}{9},\frac{s+1}{2} + \frac{5}{9},\frac{s+1}{2} + \frac{11}{18},\frac{s+1}{2} + \frac{2}{3},\frac{s+1}{2} + \frac{13}{18},\frac{s+1}{2} + \frac{7}{9},\frac{s+1}{2} + \frac{5}{6},\frac{s+1}{2} + \frac{8}{9},
\frac{s+1}{2} + \frac{17}{18}}{\frac{2}{9},\frac{1}{3},\frac{4}{9},\frac{5}{9},\frac{2}{3},\frac{7}{9},\frac{8}{9},\frac{10}{9}}{ \lambda^2 y^{9} \; 2^{16} 3^{18}}\\[1.5em]
\hphantom{\mathcal{H}_{9,0}(\lambda;x,y)=} + \sum_{s=0}^{\infty} \tfrac{\lambda^{s+1}}{(s+1)!} \; x^{9 s+ 5} \; y^2 \; \left(\tfrac{(9(s+1))!}{(9(s+1)- 4)! 2!}\right) \; 
\pFq{16}{8}{\frac{s+1}{2} + \frac{1}{18},\frac{s+1}{2} + \frac{2}{18},
\dotsc\,,
%\frac{s+1}{2} + \frac{1}{6},\frac{s+1}{2} + \frac{2}{9},\frac{s+1}{2} + \frac{5}{18},\frac{s+1}{2} + \frac{1}{3},\frac{s+1}{2} + \frac{7}{18},\frac{s+1}{2} + \frac{4}{9},\frac{s+1}{2} + \frac{5}{9},\frac{s+1}{2} + \frac{11}{18},\frac{s+1}{2} + \frac{2}{3},\frac{s+1}{2} + \frac{13}{18},\frac{s+1}{2} + \frac{7}{9},\frac{s+1}{2} + \frac{5}{6},\frac{s+1}{2} + \frac{8}{9},
\frac{s+1}{2} + \frac{17}{18}}{\frac{1}{3},\frac{4}{9},\frac{5}{9},\frac{2}{3},\frac{7}{9},\frac{8}{9},\frac{10}{9},\frac{11}{9}}{ \lambda^2 y^{9} \; 2^{16} 3^{18}}\\[1.5em]
\hphantom{\mathcal{H}_{9,0}(\lambda;x,y)=} + \sum_{s=0}^{\infty} \tfrac{\lambda^{s+1}}{(s+1)!} \; x^{9 s+ 3} \; y^3 \; \left(\tfrac{(9(s+1))!}{(9(s+1)- 6)! 3!}\right) \; 
\pFq{16}{8}{\frac{s+1}{2} + \frac{1}{18},\frac{s+1}{2} + \frac{2}{18},
\dotsc\,,
%\frac{s+1}{2} + \frac{1}{6},\frac{s+1}{2} + \frac{2}{9},\frac{s+1}{2} + \frac{5}{18},\frac{s+1}{2} + \frac{1}{3},\frac{s+1}{2} + \frac{7}{18},\frac{s+1}{2} + \frac{4}{9},\frac{s+1}{2} + \frac{5}{9},\frac{s+1}{2} + \frac{11}{18},\frac{s+1}{2} + \frac{2}{3},\frac{s+1}{2} + \frac{13}{18},\frac{s+1}{2} + \frac{7}{9},\frac{s+1}{2} + \frac{5}{6},\frac{s+1}{2} + \frac{8}{9},
\frac{s+1}{2} + \frac{17}{18}}{\frac{4}{9},\frac{5}{9},\frac{2}{3},\frac{7}{9},\frac{8}{9},\frac{10}{9},\frac{11}{9},\frac{4}{3}}{ \lambda^2 y^{9} \; 2^{16} 3^{18}}\\[1.5em]
\hphantom{\mathcal{H}_{9,0}(\lambda;x,y)=} + \sum_{s=0}^{\infty} \tfrac{\lambda^{s+1}}{(s+1)!} \; x^{9 s+ 1} \; y^4 \; \left(\tfrac{(9(s+1))!}{(9(s+1)- 8)! 4!}\right) \; 
\pFq{16}{8}{\frac{s+1}{2} + \frac{1}{18},\frac{s+1}{2} + \frac{2}{18},
\dotsc\,,
%\frac{s+1}{2} + \frac{1}{6},\frac{s+1}{2} + \frac{2}{9},\frac{s+1}{2} + \frac{5}{18},\frac{s+1}{2} + \frac{1}{3},\frac{s+1}{2} + \frac{7}{18},\frac{s+1}{2} + \frac{4}{9},\frac{s+1}{2} + \frac{5}{9},\frac{s+1}{2} + \frac{11}{18},\frac{s+1}{2} + \frac{2}{3},\frac{s+1}{2} + \frac{13}{18},\frac{s+1}{2} + \frac{7}{9},\frac{s+1}{2} + \frac{5}{6},\frac{s+1}{2} + \frac{8}{9},
\frac{s+1}{2} + \frac{17}{18}}{\frac{5}{9},\frac{2}{3},\frac{7}{9},\frac{8}{9},\frac{10}{9},\frac{11}{9},\frac{4}{3},\frac{13}{9}}{ \lambda^2 y^{9} \; 2^{16} 3^{18}}\\[1.5em]
\hphantom{\mathcal{H}_{9,0}(\lambda;x,y)=}+\sum_{s=0}^{\infty} \tfrac{\lambda^{s+2}}{(s+2)!}\; x^{9s + 8} \; y^5 \; \left(\tfrac{(9(s+2))!}{(9(s+2) - 10)!5!}\right) \; 
\pFq{16}{8}{\frac{s+2}{2} + \frac{1}{18},\frac{s+2}{2} + \frac{2}{18},
\dotsc\,,
%\frac{s+2}{2} + \frac{1}{6},\frac{s+2}{2} + \frac{2}{9},\frac{s+2}{2} + \frac{5}{18},\frac{s+2}{2} + \frac{1}{3},\frac{s+2}{2} + \frac{7}{18},\frac{s+2}{2} + \frac{4}{9},\frac{s+2}{2} + \frac{5}{9},\frac{s+2}{2} + \frac{11}{18},\frac{s+2}{2} + \frac{2}{3},\frac{s+2}{2} + \frac{13}{18},\frac{s+2}{2} + \frac{7}{9},\frac{s+2}{2} + \frac{5}{6},\frac{s+2}{2} + \frac{8}{9},
\frac{s+2}{2} + \frac{17}{18}}{\frac{2}{3},\frac{7}{9},\frac{8}{9},\frac{10}{9},\frac{11}{9},\frac{4}{3},\frac{13}{9},\frac{14}{9}}{\lambda^2 y^{9} \; 2^{16} 3^{18}}\\[1.5em]
\hphantom{\mathcal{H}_{9,0}(\lambda;x,y)=}+\sum_{s=0}^{\infty} \tfrac{\lambda^{s+2}}{(s+2)!}\; x^{9s + 6} \; y^6 \; \left(\tfrac{(9(s+2))!}{(9(s+2) - 12)!6!}\right) \; 
\pFq{16}{8}{\frac{s+2}{2} + \frac{1}{18},\frac{s+2}{2} + \frac{2}{18},
\dotsc\,,
%\frac{s+2}{2} + \frac{1}{6},\frac{s+2}{2} + \frac{2}{9},\frac{s+2}{2} + \frac{5}{18},\frac{s+2}{2} + \frac{1}{3},\frac{s+2}{2} + \frac{7}{18},\frac{s+2}{2} + \frac{4}{9},\frac{s+2}{2} + \frac{5}{9},\frac{s+2}{2} + \frac{11}{18},\frac{s+2}{2} + \frac{2}{3},\frac{s+2}{2} + \frac{13}{18},\frac{s+2}{2} + \frac{7}{9},\frac{s+2}{2} + \frac{5}{6},\frac{s+2}{2} + \frac{8}{9},
\frac{s+2}{2} + \frac{17}{18}}{\frac{7}{9},\frac{8}{9},\frac{10}{9},\frac{11}{9},\frac{4}{3},\frac{13}{9},\frac{14}{9},\frac{5}{3}}{\lambda^2 y^{9} \; 2^{16} 3^{18}}\\[1.5em]
\hphantom{\mathcal{H}_{9,0}(\lambda;x,y)=}+\sum_{s=0}^{\infty} \tfrac{\lambda^{s+2}}{(s+2)!}\; x^{9s + 4} \; y^7 \; \left(\tfrac{(9(s+2))!}{(9(s+2) - 14)!7!}\right) \; 
\pFq{16}{8}{\frac{s+2}{2} + \frac{1}{18},\frac{s+2}{2} + \frac{2}{18},
\dotsc\,,
%\frac{s+2}{2} + \frac{1}{6},\frac{s+2}{2} + \frac{2}{9},\frac{s+2}{2} + \frac{5}{18},\frac{s+2}{2} + \frac{1}{3},\frac{s+2}{2} + \frac{7}{18},\frac{s+2}{2} + \frac{4}{9},\frac{s+2}{2} + \frac{5}{9},\frac{s+2}{2} + \frac{11}{18},\frac{s+2}{2} + \frac{2}{3},\frac{s+2}{2} + \frac{13}{18},\frac{s+2}{2} + \frac{7}{9},\frac{s+2}{2} + \frac{5}{6},\frac{s+2}{2} + \frac{8}{9},
\frac{s+2}{2} + \frac{17}{18}}{\frac{8}{9},\frac{10}{9},\frac{11}{9},\frac{4}{3},\frac{13}{9},\frac{14}{9},\frac{5}{3},\frac{16}{9}}{\lambda^2 y^{9} \; 2^{16} 3^{18}}\\[1.5em]
\hphantom{\mathcal{H}_{9,0}(\lambda;x,y)=}+\sum_{s=0}^{\infty} \tfrac{\lambda^{s+2}}{(s+2)!}\; x^{9s + 2} \; y^8 \; \left(\tfrac{(9(s+2))!}{(9(s+2) - 16)!8!}\right) \; 
\pFq{16}{8}{\frac{s+2}{2} + \frac{1}{18},\frac{s+2}{2} + \frac{2}{18},
\dotsc\,,
%\frac{s+2}{2} + \frac{1}{6},\frac{s+2}{2} + \frac{2}{9},\frac{s+2}{2} + \frac{5}{18},\frac{s+2}{2} + \frac{1}{3},\frac{s+2}{2} + \frac{7}{18},\frac{s+2}{2} + \frac{4}{9},\frac{s+2}{2} + \frac{5}{9},\frac{s+2}{2} + \frac{11}{18},\frac{s+2}{2} + \frac{2}{3},\frac{s+2}{2} + \frac{13}{18},\frac{s+2}{2} + \frac{7}{9},\frac{s+2}{2} + \frac{5}{6},\frac{s+2}{2} + \frac{8}{9},
\frac{s+2}{2} + \frac{17}{18}}{\frac{10}{9},\frac{11}{9},\frac{4}{3},\frac{13}{9},\frac{14}{9},\frac{5}{3},\frac{16}{9},\frac{17}{9}}{\lambda^2 y^{9} \; 2^{16} 3^{18}}\\[1.5em]
\pagebreak
\mathcal{H}_{10,0}(\lambda;x,y)=\sum_{s=0}^{\infty} \tfrac{\lambda^s}{s!} \; x^{10s} \; 
\pFq{9}{4}{s+\frac{1}{10},s+\frac{2}{10},\dotsc\,,s+\frac{9}{10}}{\frac{1}{5},\frac{2}{5},\frac{3}{5},\frac{4}{5}}{\lambda  y^5\; 2^{10} 5^5}\\[1.5em]
%\pFq{9}{4}{s+\frac{1}{10},s+\frac{1}{5},s+\frac{3}{10},s+\frac{2}{5},s+\frac{1}{2},s+\frac{3}{5},s+\frac{7}{10},s+\frac{4}{5},s+\frac{9}{10}}{\frac{1}{5},\frac{2}{5},\frac{3}{5},\frac{4}{5}}{\lambda  y^5\; 2^{10} 5^5}
\hphantom{\mathcal{H}_{10,0}(\lambda;x,y)=}+\sum_{s=0}^{\infty} \tfrac{\lambda^{s+1}}{(s+1)!} \;x^{10 s+8}\;y\; \left(\tfrac{(10(s+1))!}{(10(s+1) -2)!}\right)\; 
\pFq{9}{4}{(s+1) + \frac{1}{10},(s+1) + \frac{2}{10},\dotsc\,,(s+1) + \frac{9}{10}}{\frac{2}{5},\frac{3}{5},\frac{4}{5},\frac{6}{5}}{\lambda  y^5\; 2^{10} 5^5}\\[1.5em]
%\pFq{9}{4}{(s+1) + \frac{1}{10},(s+1) + \frac{1}{5},(s+1) + \frac{3}{10},(s+1) + \frac{2}{5},(s+1) + \frac{1}{2},(s+1) + \frac{3}{5},(s+1) + \frac{7}{10},(s+1) + \frac{4}{5},(s+1) + \frac{9}{10}}{\frac{2}{5},\frac{3}{5},\frac{4}{5},\frac{6}{5}}{\lambda  y^5\; 2^{10} 5^5}\\[1.5em]
\hphantom{\mathcal{H}_{10,0}(\lambda;x,y)=}+\sum_{s=0}^{\infty} \tfrac{\lambda^{s+1}}{(s+1)!} \;x^{10 s+6}\;y^2\; \left(\tfrac{(10(s+1))!}{2!(10(s+1) -4)!}\right)\; 
\pFq{9}{4}{(s+1) + \frac{1}{10},(s+1) + \frac{2}{10},\dotsc\,,(s+1) + \frac{9}{10}}{\frac{3}{5},\frac{4}{5},\frac{6}{5},\frac{7}{5}}{\lambda  y^5\; 2^{10} 5^5}\\[1.5em]
%\pFq{9}{4}{(s+1) + \frac{1}{10},(s+1) + \frac{1}{5},(s+1) + \frac{3}{10},(s+1) + \frac{2}{5},(s+1) + \frac{1}{2},(s+1) + \frac{3}{5},(s+1) + \frac{7}{10},(s+1) + \frac{4}{5},(s+1) + \frac{9}{10}}{\frac{3}{5},\frac{4}{5},\frac{6}{5},\frac{7}{5}}{\lambda  y^5\; 2^{10} 5^5}\\[1.5em]
\hphantom{\mathcal{H}_{10,0}(\lambda;x,y)=}+\sum_{s=0}^{\infty} \tfrac{\lambda^{s+1}}{(s+1)!} \;x^{10 s+4}\;y^3\; \left(\tfrac{(10(s+1))!}{3!(10(s+1) -6)!}\right)\; 
\pFq{9}{4}{(s+1) + \frac{1}{10},(s+1) + \frac{2}{10},\dotsc\,,(s+1) + \frac{9}{10}}{\frac{4}{5},\frac{6}{5},\frac{7}{5},\frac{8}{5}}{\lambda  y^5\; 2^{10} 5^5}\\[1.5em]
%\pFq{9}{4}{(s+1) + \frac{1}{10},(s+1) + \frac{1}{5},(s+1) + \frac{3}{10},(s+1) + \frac{2}{5},(s+1) + \frac{1}{2},(s+1) + \frac{3}{5},(s+1) + \frac{7}{10},(s+1) + \frac{4}{5},(s+1) + \frac{9}{10}}{\frac{4}{5},\frac{6}{5},\frac{7}{5},\frac{8}{5}}{\lambda  y^5\; 2^{10} 5^5}\\[1.5em]
\hphantom{\mathcal{H}_{10,0}(\lambda;x,y)=}+\sum_{s=0}^{\infty} \tfrac{\lambda^{s+1}}{(s+1)!} \;x^{10 s+2}\;y^4\; \left(\tfrac{(10(s+1))!}{4!(10(s+1) -8)!}\right)\; 
\pFq{9}{4}{(s+1) + \frac{1}{10},(s+1) + \frac{2}{10},\dotsc\,,(s+1) + \frac{9}{10}}{\frac{6}{5},\frac{7}{5},\frac{8}{5},\frac{9}{5}}{\lambda  y^5\; 2^{10} 5^5}\\[1.5em]
%\pFq{9}{4}{(s+1) + \frac{1}{10},(s+1) + \frac{1}{5},(s+1) + \frac{3}{10},(s+1) + \frac{2}{5},(s+1) + \frac{1}{2},(s+1) + \frac{3}{5},(s+1) + \frac{7}{10},(s+1) + \frac{4}{5},(s+1) + \frac{9}{10}}{\frac{6}{5},\frac{7}{5},\frac{8}{5},\frac{9}{5}}{\lambda  y^5\; 2^{10} 5^5}\\[1.5em]
\end{longtable}
\end{landscape}

\begin{landscape}
\begin{longtable}{L}
  \caption{\small\ Shifted exponential lacunary generating functions $\mathcal{H}_{K,L}(\lambda;x;y)$ for the bi-variate Hermite polynomials $H_n(x,y)$, $K=3,4$ and $L=1,2,3$.\label{tab:HPlacShift}}\\
\hline\\
\endfirsthead
{\tablename\ \thetable\  \textit{-- continued from previous page}} \\
\hline\\
\endhead
\hline \textit{Continued on next page} \\
\endfoot
\hline
\endlastfoot
\mathcal{H}_{3,1}(\lambda;x,y)=\sum_{s=0}^{\infty} \tfrac{\lambda^s}{s!} \; 
\left(
x^{3 s+1}+  x^{3 s-1} \binom{3 s}{1}2 y
\right) \; \pFq{4}{2}{\frac{s}{2} + \frac{1}{6},\frac{s}{2} + \frac{1}{3},\frac{s}{2} + \frac{2}{3},\frac{s}{2} + \frac{5}{6}}{\frac{1}{3},\frac{2}{3}}{ \lambda^2 y^{3}\; 2^4 3^3}\\[1.5em]
\hphantom{\mathcal{H}_{3,0}(\lambda;x,y)=} + \sum_{s=0}^{\infty} \tfrac{\lambda^{s+1}}{(s+1)!} \; 
\left(
x^{3 (s+1)-1}y
+ x^{3 (s+1)-3} \binom{3 (s+1)-2}{1}2 y^2 
\right)\; \left(\tfrac{(3(s+1))!}{(3(s+1)- 2)! }\right) \; \pFq{4}{2}{\frac{s+1}{2} + \frac{1}{6},\frac{s+1}{2} + \frac{1}{3},\frac{s+1}{2} + \frac{2}{3},\frac{s+1}{2} + \frac{5}{6}}{\frac{2}{3},\frac{4}{3}}{ \lambda^2 y^{3} \; 2^4 3^3}\\[1.5em]
\hphantom{\mathcal{H}_{3,0}(\lambda;x,y)=}+\sum_{s=0}^{\infty} \tfrac{\lambda^{s+2}}{(s+2)!}\; 
\left(
x^{3 (s+2)-3}y^2
+x^{3 (s+2)-5} \binom{3 (s+2)-4}{1}2 y^3
\right) \; \left(\tfrac{(3(s+2))!}{(3(s+2) - 4)!2!}\right) \; \pFq{4}{2}{\frac{s+2}{2} + \frac{1}{6},\frac{s+2}{2} + \frac{1}{3},\frac{s+2}{2} + \frac{2}{3},\frac{s+2}{2} + \frac{5}{6}}{\frac{4}{3},\frac{5}{3}}{\lambda^2 y^{3} \; 2^4 3^3}\\[1.5em]
\mathcal{H}_{3,2}(\lambda;x,y)
=\sum_{s=0}^{\infty} \tfrac{\lambda^s}{s!} \; 
\left(
x^{3 s} \left(x^2+2 y\right)
+ x^{3 s} \binom{3 s}{1}4 y
+ x^{3 s-2} \binom{3 s}{2} 2!\cdot 4 y^2
\right) \; \pFq{4}{2}{\frac{s}{2} + \frac{1}{6},\frac{s}{2} + \frac{1}{3},\frac{s}{2} + \frac{2}{3},\frac{s}{2} + \frac{5}{6}}{\frac{1}{3},\frac{2}{3}}{ \lambda^2 y^{3}\; 2^4 3^3}\\[1.5em]
\hphantom{\mathcal{H}_{3,0}(\lambda;x,y)=} + \sum_{s=0}^{\infty} \tfrac{\lambda^{s+1}}{(s+1)!} \; 
\left( 
 x^{3 (s+1)-2}y \left(x^2+2 y\right)
+ x^{3 (s+1)-2} \binom{3 (s+1)-2}{1}4 y^2
+ x^{3 (s+1)-4} \binom{3 (s+1)-2}{2}2!\cdot 4 y^3 \right)\; \left(\tfrac{(3(s+1))!}{(3(s+1)- 2)! }\right) \; \cdot\\
\hphantom{\mathcal{H}_{3,0}(\lambda;x,y)=} \quad \cdot\;\pFq{4}{2}{\frac{s+1}{2} + \frac{1}{6},\frac{s+1}{2} + \frac{1}{3},\frac{s+1}{2} + \frac{2}{3},\frac{s+1}{2} + \frac{5}{6}}{\frac{2}{3},\frac{4}{3}}{ \lambda^2 y^{3} \; 2^4 3^3}\\[1.5em]
\hphantom{\mathcal{H}_{3,0}(\lambda;x,y)=}
+\sum_{s=0}^{\infty} \tfrac{\lambda^{s+2}}{(s+2)!}\; 
\left( 
x^{3 (s+2)-4}y^2 \left(x^2+2 y\right)
+ x^{3 (s+2)-4} \binom{3 (s+2)-4}{1}4 y^3
+ x^{3 (s+2)-6} \binom{3 (s+2)-4}{2} 2!\cdot 4 y^4\right) \; \left(\tfrac{(3(s+2))!}{(3(s+2) - 4)!2!}\right) \; \cdot\\
\hphantom{\mathcal{H}_{3,0}(\lambda;x,y)=}\quad \cdot \pFq{4}{2}{\frac{s+2}{2} + \frac{1}{6},\frac{s+2}{2} + \frac{1}{3},\frac{s+2}{2} + \frac{2}{3},\frac{s+2}{2} + \frac{5}{6}}{\frac{4}{3},\frac{5}{3}}{\lambda^2 y^{3} \; 2^4 3^3}\\[1.5em]
\mathcal{H}_{3,3}(\lambda;x,y)=\sum_{s=0}^{\infty} \tfrac{\lambda^s}{s!} \; \left(
x^{3 s} \left(x^3+6 x y\right)
+ x^{3 s-1}\binom{3 s}{1}6 y \left(x^2+2 y\right) 
+ x^{3 s-1} \binom{3 s}{2}2!\cdot 12 y^2
+ x^{3 s-3} \binom{3 s}{3}3!\cdot 8 y^3
\right) \; \pFq{4}{2}{\frac{s}{2} + \frac{1}{6},\frac{s}{2} + \frac{1}{3},\frac{s}{2} + \frac{2}{3},\frac{s}{2} + \frac{5}{6}}{\frac{1}{3},\frac{2}{3}}{ \lambda^2 y^{3}\; 2^4 3^3}\\[1.5em]
\hphantom{\mathcal{H}_{3,0}(\lambda;x,y)=}+ \sum_{s=0}^{\infty} \tfrac{\lambda^{s+1}}{(s+1)!} \; \left(
x^{3 (s+1)-2} y\left(x^3+6 x y\right)
+ x^{3 (s+1)-3}\binom{3 (s+1)-2}{1}6 y^2 \left(x^2+2 y\right)  
+ x^{3 (s+1)-3} \binom{3 (s+1)-2}{2}2! \cdot 12 y^3
\right.\\
\\
\hphantom{\mathcal{H}_{3,0}(\lambda;x,y)=}\quad \left. 
+ x^{3 (s+1)-5} \binom{3 (s+1)-2}{3} 3!\cdot 8 y^4
\right)\; \left(\tfrac{(3(s+1))!}{(3(s+1)- 2)! }\right)  \; \pFq{4}{2}{\frac{s+1}{2} + \frac{1}{6},\frac{s+1}{2} + \frac{1}{3},\frac{s+1}{2} + \frac{2}{3},\frac{s+1}{2} + \frac{5}{6}}{\frac{2}{3},\frac{4}{3}}{ \lambda^2 y^{3} \; 2^4 3^3}\\[1.5em]
\hphantom{\mathcal{H}_{3,0}(\lambda;x,y)=}+\sum_{s=0}^{\infty} \tfrac{\lambda^{s+2}}{(s+2)!}\; \left(
 x^{3 (s+2)-4}y^2 \left(x^3+6 x y\right)
 + x^{3 (s+2)-5}\binom{3 (s+2)-4}{1}6 y^3 \left(x^2+2 y\right)
 +x^{3 (s+2)-5} \binom{3 (s+2)-4}{2}2! \cdot 12 y^4 
\right.\\
\hphantom{\mathcal{H}_{3,0}(\lambda;x,y)=}\quad \left.
 + x^{3 (s+2)-7} \binom{3 (s+2)-4}{ 3}3!\cdot 8 y^5
\right) \; \left(\tfrac{(3(s+2))!}{(3(s+2) - 4)!2!}\right)  \;  \pFq{4}{2}{\frac{s+2}{2} + \frac{1}{6},\frac{s+2}{2} + \frac{1}{3},\frac{s+2}{2} + \frac{2}{3},\frac{s+2}{2} + \frac{5}{6}}{\frac{4}{3},\frac{5}{3}}{\lambda^2 y^{3} \; 2^4 3^3}\\[1.5em]
\pagebreak
\mathcal{H}_{4,1}(\lambda;x,y)=\sum_{s=0}^{\infty} \tfrac{\lambda^s}{s!} \; \left(
x^{4 s+1}
+ x^{4 s-1} \binom{4 s}{1}2 y
\right) \; \pFq{3}{1}{s+\frac{1}{4},s+\frac{1}{2},s+\frac{3}{4}}{\frac{1}{2}}{\lambda  y^2\; 2^6}\\[1.5em]
\hphantom{\mathcal{H}_{4,0}(\lambda;x,y)=}+\sum_{s=0}^{\infty} \tfrac{\lambda^{s+1}}{(s+1)!} \; \left(
x^{4 s+3}y
+ x^{4 s+1} \binom{4 s+2}{1}2 y^2
\right)\; \left(\tfrac{(4(s+1))!}{(4(s+1) -2)!}\right)\; \pFq{3}{1}{(s+1) + \frac{1}{4},(s+1) + \frac{1}{2},(s+1) + \frac{3}{4}}{\frac{3}{2}}{\lambda  y^2\; 2^6}\\[1.5em]
\mathcal{H}_{4,2}(\lambda;x,y)=\sum_{s=0}^{\infty} \tfrac{\lambda^s}{s!} \; \left(
x^{4 s} \left(x^2+2 y\right)
+ x^{4 s} \binom{4 s}{1}4 y
+ x^{4 s-2} \binom{4 s}{2}2!\cdot 4 y^2 
\right) \; \pFq{3}{1}{s+\frac{1}{4},s+\frac{1}{2},s+\frac{3}{4}}{\frac{1}{2}}{\lambda  y^2\; 2^6}\\[1.5em]
\hphantom{\mathcal{H}_{4,0}(\lambda;x,y)=}+\sum_{s=0}^{\infty} \tfrac{\lambda^{s+1}}{(s+1)!} \; \left(
x^{4 s+2}y \left(x^2+2 y\right)
+ x^{4 s+2} \binom{4 s+2}{1}4 y^2
+ x^{4 s} \binom{4 s+2}{2}2!\cdot 4 y^3
\right)\; \left(\tfrac{(4(s+1))!}{(4(s+1) -2)!}\right)\; \pFq{3}{1}{(s+1) + \frac{1}{4},(s+1) + \frac{1}{2},(s+1) + \frac{3}{4}}{\frac{3}{2}}{\lambda  y^2\; 2^6}\\[1.5em]
\mathcal{H}_{4,3}(\lambda;x,y)=\sum_{s=0}^{\infty} \tfrac{\lambda^s}{s!} \; \left(
x^{4 s} \left(x^3+6 x y\right)
+ x^{4 s-1}\binom{4 s}{1} 6 y\left(x^2+2 y\right)
+ x^{4 s-1} \binom{4 s}{2}2!\cdot 12 y^2 
+ x^{4 s-3} \binom{4 s}{3}3!\cdot 8 y^3
\right) \; \pFq{3}{1}{s+\frac{1}{4},s+\frac{1}{2},s+\frac{3}{4}}{\frac{1}{2}}{\lambda  y^2\; 2^6}\\[1.5em]
\hphantom{\mathcal{H}_{4,0}(\lambda;x,y)=}+\sum_{s=0}^{\infty} \tfrac{\lambda^{s+1}}{(s+1)!} \; \left(
x^{4 s+2} y\left(x^3+6 x y\right)
+ x^{4 s+1}\binom{4 s+2}{1}6 y^2 \left(x^2+2 y\right)
+ x^{4 s+1} \binom{4 s+2}{2}2!\cdot 12 y^3 
+ x^{4 s-1} \binom{4 s+2}{3}3!\cdot 8 y^4
\right)\;\cdot\\
\hphantom{\mathcal{H}_{4,0}(\lambda;x,y)=}\quad \cdot\; \left(\tfrac{(4(s+1))!}{(4(s+1) -2)!}\right)\; \pFq{3}{1}{(s+1) + \frac{1}{4},(s+1) + \frac{1}{2},(s+1) + \frac{3}{4}}{\frac{3}{2}}{\lambda  y^2\; 2^6}\\[1.5em]
\end{longtable}
\end{landscape}

\normalsize

%\newpage
\appendix
\numberwithin{equation}{section}

\section{Proofs}
\label{app:proofs}

\subsection{Proof of Lemma~\ref{lem:KD}}
\label{app:lemP}

\genKD*
\begin{proof}
Consider the presentation of an exponential generating function $\cG(\lambda;x,y)$ in the form as described in~\eqref{eq:EGFb},
\[
\cG(\lambda;x,y)=\sum_{r=0}^{\infty} x^r \sum_{m=0}^{\infty}\frac{\lambda^{r+m}}{(r+m)!}\; g_{r,m}(y)\,.
\]
Application of the lacunary dilatation operator $\bL_K$ (for $K\geq 1$) amounts according to the definition given in~\eqref{def:lacD} to the manipulation
\begin{equation}
\bL_{K}\left(\cG(\lambda;x,y)\right)=\sum_{r=0}^{\infty}x^r \sum_{m=0}^{\infty}
\frac{1}{(r+m)!}\; g_{r,m}(y)\; \delta_{(r+m)\Mod K,0}\frac{\lambda^{\frac{r+m}{K}}(r+m)!}{\left(\frac{r+m}{K}\right)!}\,.
\end{equation}
To proceed, we split the summation over $r$ into $\Mod K$ classes, where we need to take particular care to treat the special cases where $r=s\cdot K$ separately (see ensuing arguments below):
\begin{subequations}
\begin{align}
\bL_K(\cG(\lambda;x,y))&=\sum_{s=0}^{\infty}x^{s\cdot K}\sum_{m=0}^{\infty}\frac{\lambda^{s+\frac{m}{K}}}{(s+\frac{m}{K})!}\;g_{s\cdot K,m}(y)\;\delta_{m\Mod K,0}\label{eq:lemDproofA}\\
&\quad +\sum_{\alpha=1}^{K-1}\sum_{s=0}^{\infty} x^{s\cdot K+\alpha}\sum_{m=0}^{\infty}\frac{\lambda^{s+\frac{\alpha+m}{K}}}{(s+\frac{\alpha+m}{K})!}\; g_{s\cdot K+\alpha,m}(y)\; \delta_{(\alpha+m)\Mod K,0}\label{eq:lemDproofB}
\end{align}
\end{subequations}
Here, we made use of the standard convention $\sum_{\alpha=1}^{K-1}\dotsc=0$ for $K=1$, and we emphasized (as in the statement of the Lemma) the positions where splitting into $\Mod K$ classes has been performed via using the notation $s\cdot K$. The remaining constraint in~\eqref{eq:lemDproofA},
\[
m\Mod K=0\,,
\]
is solved evidently by $m=q\cdot K$ (for $q\in \bZ_{\geq 0}$), while the remaining constraint in~\eqref{eq:lemDproofB},
\[
(\alpha+m)\Mod K=0\,,
\]
is solved by\footnote{Equivalently, we could solve the constraint in the form $m=q\cdot K-\alpha$, but then for $q\in \bZ_{\geq 1}$, since the original summation over $m$ is taken for $m\in \bZ_{\geq 0}$.} $m=q\cdot K+K-\alpha$ (for a given $\alpha$, and with $q\in \bZ_{\geq 0}$).\\

Finally, purely for reasons of convenience in later applications, we chose to re-index the summation over $\alpha$ by making the replacement $\alpha\mapsto K-\alpha$, which gives the form of the formula as presented in the statement of the Lemma.
\end{proof}

\subsection{Proof of Corollary~\ref{cor:KD}}
\label{app:corDproof}

\genKDcor*
\begin{proof}
While in principle a rather elementary computation, the proof of the statement of the corollary requires a somewhat intricate resolution of the constraints implied by subdividing the summations involved in the calculation of $\cG_{K,L}(\lambda;x,y)$ as described in Lemma~\ref{lem:KD} further (into sums over even and odd second indices of the coefficients $g_{r,m}$, respectively). The cases of even and odd $K$ must be treated separately, as will become evident shortly.\\

Consider then first the case $K=2T$ (for $T\in \bZ_{\geq 1}$). Specializing the general formula for $\cG_{K,0}(\lambda;x,y)$ as given in~\eqref{eq:KLGF} to this case, 
\begin{subequations}
\begin{align}
\bL_{K=2T}\left(\cG(\lambda;x,y)\right)&=\sum_{s=0}^{\infty}x^{s\cdot K}\sum_{q=0}^{\infty}\frac{\lambda^{s+q}}{(s+q)!}\; g_{s\cdot K,q\cdot K}(y)
\label{eq:corDproofEA}\\
&\quad +\sum_{\alpha=1}^{K-1}\sum_{s=0}^{\infty} x^{(s+1)\cdot K-\alpha}
\sum_{q=0}^{\infty}\frac{\lambda^{s+q+1}}{(s+q+1)!}\; g_{(s+1)\cdot K,q\cdot K+\alpha}\,,\label{eq:corDproofEB}
\end{align}
\end{subequations}
we recognize immediately that since $K=2T$ is even, so is $q\cdot K$ for all $q\in \bZ_{\geq 0}$. Consequently, all terms of~\eqref{eq:corDproofEA} contribute to the ``even part'' (marked $\text{part $E$}$ in~\eqref{eq:KLGFeven}), as well as all those terms of~\eqref{eq:corDproofEB} for which $\alpha=2\beta$ (and whence we obtain a reduction of the respective sum over $\alpha$, with $1\leq \alpha \leq K-1=2T-1$, to a summation over $\beta$ with $1\leq \beta\leq T-1$). For the ``odd part'' (marked $\text{part $O$}$ in~\eqref{eq:KLGFeven}), none of the terms in~\eqref{eq:corDproofEA} contribute (since evidently all second indices $m$ of $g_{r,m}(y)$ in this summation are even), with non-trivial contributions only from terms of~\eqref{eq:corDproofEB} with $\alpha=2\beta-1$ (resulting in a restriction of the respective summation over $\alpha$, with $1\leq \alpha\leq K-1=2T-1$, to a summation over $\beta$ with $1\leq \beta \leq T$). This proves the statement of the corollary for $K=2T$.\\

The proof for the case $K=2T+1$ (with $T\in \bZ_{\geq 1}$) is entirely analogous, yet the separation of the respective summations over contributions with even or odd second indices of the coefficients $g_{r\!,\,m}(y)$ is more involved. Suffice it here to present the resolution of the respective constraints in terms of the admissible forms for summation indices $r$ and $q$ as well as the summation ranges for the respective auxiliary summation indices $\beta$, from which the second part of the corollary as given by~\eqref{eq:KLGFodd} follows (with $0\leq s,\ell\leq \infty$):
\begin{equation}
\begin{array}{llll}
\text{part} & r & q & \text{range for $\beta$}\\
\hline
\text{part $E$} & r=s\cdot K & q=2\ell & -\\
& r=(s+1)\cdot K-2\beta & q=2\ell & 1\leq \beta\leq T\\
& r=(s+1)\cdot K-2\beta+1 & q=2\ell+1 & 1\leq \beta \leq T\\
\hline
\text{part $O$} & r=s\cdot K & q=2\ell+1 & -\\
& r=(s+1)\cdot K-2\beta+1 & q=2\ell & 1\leq \beta\leq T\\
& r=(s+1)\cdot K-2\beta & q=2\ell+1 & 1\leq \beta\leq T\\
\end{array}
\end{equation}
\end{proof}

\subsection{Details on the semi-linear normal ordering technique used in the proof of Proposition~\ref{prop:LshiftHLGF}}
\label{app:SLNO}

For the interested readers' convenience, we briefly recall the salient details of the so-called semi-linear normal ordering technique, the origins of which date back at least to the 1960s (see~\cite{dattoli1997evolution} for a historical overview and an extensive list of application examples in physics, and also~\cite{blasiak2005boson,blasiak2011combinatorial,bdp2017} for applications of the technique in combinatorics and chemistry).\\

Let thus $D$ be a differential operator of the ``semi-linear form'' (i.e.\ at most linear in $\tfrac{d}{dx}$),
\begin{equation}
D=q(x)\tfrac{d}{dx}+v(x)\,,
\end{equation}
where for simplicity (and as sufficient for the purposes of the proof of Proposition~\ref{prop:LshiftHLGF}) we assume that $q(x)$ and $v(x)$ are polynomials with real-valued coefficients, $q(x),v(x)\in \bR[x]$. Let $f(x)$ be an entire function, and $\lambda$ a formal variable. Then one finds that
\begin{equation}
e^{\lambda D}f(x)=g(\lambda;x) f(T(\lambda;x))\,,
\end{equation}
where the \emph{substitution function} $T(\lambda;x)$ and the \emph{prefactor function} $g(\lambda;x)$ are computed via solving the following initial value problem:
\begin{equation}
\begin{aligned}
\tfrac{\partial}{\partial \lambda}T(\lambda;x)&=q(T(\lambda;x))\,,\quad & T(0;x)&=x\\
\tfrac{\partial}{\partial \lambda} \ln (g(\lambda;x))&=v(T(\lambda;x))\,,\quad &
g(0;x)&=1\,.
\end{aligned}
\end{equation}

\subsection{Details of the proof of Corollary~\ref{cor:HPlacKL}}
\label{app:proofLacKL}

\HPlacKL*
\begin{proof}
In order to efficiently apply the operational rule stated in~\eqref{eq:propHLGF} to the explicit formulae for $\cH_{K,0}(\lambda;x,y)$ as stated in Theorem~\ref{thm:HPlacK}, recall first that since $\cH_{1,0}(\mu;x,y)$ is the exponential generating function of the two-variable Hermite polynomials $H_n(x,y)$, we find by definition that
\begin{equation}\label{eq:appKLlacA}
\left[\left(\frac{\partial}{\partial \mu}\right)^r \cH_{1,0}(\mu;x,y)\right]\bigg\vert_{\mu\mapsto 0}=H_r(x,y)\,.
\end{equation}
We also need the following elementary identity (where $f(\mu;x,y)$ and $g(\mu;x,y)$ are formal power series in the formal variables $\lambda,x,y$)
\begin{equation}\label{eq:appKLlacB}
\left(\frac{\partial}{\partial \mu}\right)^L\left(f(\mu;x,y)g(\mu;x,y)\right)=\sum_{q=0}^L\binom{L}{q}
\left(\left(\frac{\partial}{\partial \mu}\right)^{L-q}f(\mu;x,y)\right)\left(\left(\frac{\partial}{\partial \mu}\right)^{q}g(\mu;x,y)\right)\,.
\end{equation}
To complete the proof, note that according to~\eqref{eq:propHLGF} in the calculation of $\cH_{K,L}(\lambda;x,y)$ from $\cH_{K,0}(\lambda;x,y)$, the variable $x$ is replaced by $(x+2\mu y)$, and the entire expression for $\cH_{K,)}(\lambda;x+2\mu y,y)$ multiplied by $\cH_{1,0}(\mu;x,y)$, followed by taking $L$ derivatives of the overall expression with respect to $\mu$ and setting $\mu$ to zero, in the process obtaining $\cH_{K,L}(\lambda;x,y)$. It thus suffices to focus on a generic term of the form $x^P$ (for $P\in \bZ_{\geq0}$) that occurs in $\cH_{K,0}(\lambda;x,y)$. It transformed via the previously described procedure as follows:
\begin{equation}
\begin{aligned}
&\left[\left(\frac{\partial}{\partial \mu}\right)^L\; \cH_{1,0}(\mu;x,y)\; x^P\right]\bigg\vert_{\mu\mapsto 0}\\
&\qquad\overset{\eqref{eq:appKLlacB}}{=}\left[\sum_{q=0}^L\binom{L}{q}
\left(\left(\frac{\partial}{\partial \mu}\right)^{L-q}\cH_{1,0}(\mu;x,y)\right)
\left(\left(\frac{\partial}{\partial \mu}\right)^{q}x^P\right)\right]\bigg\vert_{\mu\mapsto 0}\\
&\qquad\overset{\eqref{eq:appKLlacA}}{=}\sum_{q=0}^L \binom{L}{q}H_{L-q}(x,y)\;q! \binom{P}{q} \;x^{P-q}\; (2y)^q\,.
\end{aligned}
\end{equation}
The proof then follows by application of this auxiliary formula to the computation of $\cH_{K,L}(\lambda;x,y)$ as described above.
\end{proof}

\section{Maple code listing for algorithmic verification of lacunary generating functions}
\label{app:Maple}

As a consistency check, we provide a some listings of \textsc{Maple}${}^{TM}$ code here for the readers' convenience.\\

Bi-variate Hermite Polynomials of parameter $M=2,3,..$ are denoted $H(n,M,x,y)$:
\begin{lstlisting}
> H:=proc(n,M,x,y)n!*sum(x^(n-M*r)*y^r/(r!*(n-M*r)!),r=0..floor(n/M)) ;end;
\end{lstlisting}
We are interested only in the case $M=2$; presented below are the traditional two-variable Hermite polynomials, denoted $He(n, x,y)$:
\begin{lstlisting}
> He:=proc(n,x,y)H(n,2,x,y);end;
\end{lstlisting}
The polynomials $He(n,x,y)$ are related to the conventional form of the Hermite polynomials via (Maple notation)
\begin{equation}
\text{\lstinline{HermiteH(n,x)}}= \text{\lstinline{He(n, 2*x ,-1)}}\,.
\end{equation}
This may be explicitly verified via Maple, taking differences:
\begin{lstlisting}
> seq(expand(HermiteH(kk,x))-He(kk,2*x,-1),kk=0..7);
\end{lstlisting}

We present an algorithm to verify the lacunary generating function formulae by means of taking derivatives with respect to the formal parameter $\lambda$ of the generating functions $\cH_{K,L}(\lambda;x,y)$ followed by setting $\lambda$ to zero, thus producing the respective Hermite polynomials. In the following code listings, the formal variable is denoted \lstinline{Lam}, while the triple lacunary generating function $\cH_{3,0}(\lambda;x,y)$ is denoted \lstinline{GHP3_0(smax,Lam,x,y)}. Here, the variable \lstinline{smax} is the order of truncation of powers of \lstinline{Lam}.
\begin{lstlisting}
>GHP3_0:=proc(smax,Lam,x,y)sum(Lam^s*x^(3*s)*hypergeom([s/2+1/6, s/2+1/3,s/2+2/3,s/2+5/6],[1/3,2/3],Lam^2*y^3*(2^4)*(3^3))/s!,s=0.. smax)+sum(Lam^(s+1)*x^(3*s+1)*y*((3*s+3)!/(3*s+1)!)*hypergeom([ (s+1)/2+1/6,(s+1)/2+1/3,(s+1)/2+2/3,(s+1)/2+5/6],[2/3,4/3],Lam^2* y^3*(2^4)*(3^3))/(s+1)!,s=0..smax)+sum(Lam^(s+2)*x^(3*s+2)*y^2*((3* s+6)!/(3*s+2)!)*hypergeom([(s+2)/2+1/6,(s+2)/2+1/3,(s+2)/2+2/3, (s+2)/2+5/6],[4/3,5/3],Lam^2*y^3*(2^4)*(3^3))/(2*(s+2)!),s=0..smax) ;end;
\end{lstlisting}
For a concrete test of the triple lacunary generating function, we truncate at \lstinline{smax=100}. We compare the difference between the \lstinline{kk}-th derivative with respect to to \lstinline{Lam} evaluated at \lstinline{Lam=0} and the two-variable Hermite polynomial of order \lstinline{3*kk}, for \lstinline{kk=1..16}:
\begin{lstlisting}
> seq(expand(simplify(subs(Lam=0,diff(GHP3_0(100,Lam,x,y),Lam$kk))))- He(3*kk,x,y),kk=1..16);
\end{lstlisting}

Below we repeat the same procedure for the quadruple lacunary generating function $\cH_{4,0}(\lambda;x,y)$, denoted \lstinline{GHP4_0(smax, Lam, x,y)}:
\begin{lstlisting}
>GHP4_0:=proc(smax,Lam,x,y)sum(Lam^s*x^(4*s)*hypergeom([s+1/4,s+1/2, s+3/4],[1/2],Lam*y^2*2^6)/s!,s=0..smax)+sum(Lam^(s+1)*x^(4*s+2)*y*( (4*s+4)!/(4*s+2)!)*hypergeom([s+1+1/4,s+1+1/2,s+1+3/4],[3/2],Lam* y^2*2^6)/(s+1)!,s=0..smax);end;
\end{lstlisting}
Truncating at \lstinline{smax=100}, we compare the differences between the derivatives with respect to \lstinline{Lam} evaluated at \lstinline{Lam=0} and the two-variable Hermite polynomials \lstinline{He(4*kk,x,y)} of order \lstinline{4*kk}, for \lstinline{kk=1..16}:
\begin{lstlisting}
> seq(expand(simplify(subs(Lam=0,diff(GHP4_0(100,Lam,x,y),Lam$kk))))- He(4*kk,x,y),kk=1..16);
\end{lstlisting}

As a final example, we present below an analogous computation for the quintuple lacunary generating function $\cH_{5,0}(\lambda;x,y)$, with an example calculation for truncation order \lstinline{smax=150} and \lstinline{kk=1..16} (whence verifying up to the polynomials of order $5*15$):

\begin{lstlisting}
> GHP5_0:=proc(smax,Lam, x,y) sum(Lam^s*x^(5*s)*hypergeom([s/2+1/10, s/2+1/5,s/2+3/10,s/2+2/5,s/2+3/5,s/2+7/10,s/2+4/5,s/2+9/10],[1/5, 2/5,3/5,4/5],Lam^2*y^5*2^8*5^5)/s!,s=0..smax)+
sum(Lam^(s+1)*x^(5*s+3)*y*((5*s+5)!/(5*s+3)!)*hypergeom(subs (s=s+1,[s/2+1/10,s/2+1/5,s/2+3/10,s/2+2/5,s/2+3/5,s/2+7/10,s/2+4/5, s/2+9/10]),[2/5,3/5,4/5,6/5],Lam^2*y^5*2^8*5^5)/(s+1)!,s=0..smax)+
sum(Lam^(s+1)*x^(5*s+1)*y^2*((5*s+5)!/(2*(5*s+1)!))* hypergeom(subs(s=s+1,[s/2+1/10,s/2+1/5,s/2+3/10,s/2+2/5,s/2+3/5, s/2+7/10,s/2+4/5,s/2+9/10]),[3/5,4/5,6/5,7/5],Lam^2*y^5*2^8*5^5)/ (s+1)!,s=0..smax)+ sum(Lam^(s+2)*x^(5*s+4)*y^3*((5*s+10)!/(6* (5*s+4)!))*hypergeom(subs(s=s+2,[s/2+1/10,s/2+1/5,s/2+3/10,s/2+2/5, s/2+3/5,s/2+7/10,s/2+4/5,s/2+9/10]),[4/5,6/5,7/5,8/5],Lam^2*y^5* 2^8*5^5)/(s+2)!,s=0..smax) + sum(Lam^(s+2)*x^(5*s+2)*y^4*((5* s+10)!/(24*(5*s+2)!))*hypergeom(subs(s=s+2,[s/2+1/10,s/2+1/5, s/2+3/10,s/2+2/5,s/2+3/5,s/2+7/10,s/2+4/5,s/2+9/10]),[6/5,7/5,8/5, 9/5],Lam^2*y^5*2^8*5^5)/(s+2)!,s=0..smax);end;

> seq(expand(simplify(subs(Lam=0,diff(GHP5_0(150,Lam,x,y),Lam$kk))))- He(5*kk,x,y),kk=1..16);
\end{lstlisting}

\end{document}